\newtheorem{theorem}{Theorem}[section]
\newtheorem{lemma}[theorem]{Lemma}
\newtheorem{proposition}[theorem]{Proposition}
\newtheorem{corollary}[theorem]{Corollary}
\theoremstyle{definition}
\newtheorem{definition}[theorem]{Definition}
\newtheorem{example}[theorem]{Example}
\theoremstyle{remark}
\numberwithin{equation}{section}
\newcommand{\abs}[1]{\lvert#1\rvert}
\newcommand{\norm}[1]{\lVert#1\rVert}
\begin{document}
\begin{CJK*}{UTF8}{bsmi}
\title{ Thermofield Double States in Group Field Theory }

\author[1]{Xiao-Kan Guo (郭肖侃)\footnote {E-mail: kankuohsiao@whu.edu.cn }}
\affil[1]{Department of Physics, Beijing Normal University, Beijing 100875, China}

\date{\today}

\maketitle

\begin{abstract}
 Group field theories are higher-rank generalizations of matrix/tensor models, and encode the simplicial geometries of quantum gravity. In this paper, we study the thermofield double states in group field theories.
The starting point is the equilibrium Gibbs states in group field theory recently found by Kotecha and Oriti, based on which we construct the thermofield double state as a ``thermal" vacuum respecting the Kubo-Martin-Schwinger condition. We work with the Weyl $C^*$-algebra of group fields, and a particular type of thermofield double states with single type of symmetry are  obtained from the squeezed states on this Weyl algebra.  
The thermofield double states, when viewed as states on the  group field theory Fock vacuum, are condensate states at finite flow parameter $\beta$. We suggest that the equilibrium flow parameters $\beta$ of this type of thermofield double states in the group field theory condensate pictures of  black hole horizon and quantum cosmology are related to the inverse temperatures in gravitational thermodynamics.
\end{abstract}
\thispagestyle{empty}
\newpage
\pagenumbering{arabic}
\end{CJK*}
\section{Introduction}
Searching for the correct theory of quantum gravity is a recurring theme in theoretical physics. Many candidate theories have been proposed over the years, but no consensus is reached. In this respect, it is  worthwhile to find some common features from different theories of quantum gravity.
The group field theory (GFT) approach to quantum gravity can be related to many different approaches to quantum gravity, and hence is a natural place to find the common features of different theories.

GFTs are originally proposed to generate the simplicial quantum gravity such as the Ponzano-Regge model by topological lattice field theories with fields defined on the gauge group \cite{Bou92,Oog92}. It is realized in \cite{DFKR00} that the GFTs can also be utilized to generate the spin foam models and to assure the triangulation independence of spin foams. Since then  GFT provides an alternative way of viewing the spin foam models. In essence, the group fields are functions over the gauge group and hence incorporate the internal gauge symmetry quantum numbers into the fields, which is also the case for matrix models or  tensor models. The Feynman diagrams  in matrix or tensor models are simplicial graphs, and in GFT these simplical Feynman graphs can be related to the triangulations of simplicial quantum gravity models including spin foam models. Consequently the GFTs are natural background-independent field-theoretic models of quantum gravity based on discrete simplicial structures. See \cite{Kra11} for an introduction to the above aspects.

The field-theoretic structure  allows us to relate GFTs to many other approaches to quantum gravity in addition to simplicial quantum gravity: (i) when the gauge group is non-Abelian the group fields are noncommutative and can be formulated as a noncommutative field theory or noncommutative geometry \cite{FL06,BO10}; (ii) the simplicial Feynman graphs generated by GFTs can be interpreted as quantum geometric excitations, and the GFTs can be thus formulated as a second quantization of loop quantum gravity (LQG) built on the Ashtekar-Lewandowski vacuum where the excitations are the spin networks  \cite{Ori16}; (iii) the tensor-model structure and the spin-network structure can be combined into a tensor network representation of GFTs \cite{COZ18,CGOZ19} where a holographic duality can be built in analogy to random tensor networks; (iv) the algebra of creation and annihilation operators in the second quantization leads to an algebraic formulation of GFT \cite{KOT18}, thereby allowing the utilization of many techniques from algebraic quantum field theory. 

The second quantization formulation of GFT inspires many recent developments in GFT in relation to LQG. Basically, the second quantization formalism in quantum mechanics is designated to study the quantum many-body systems, and similarly the second quantization formalism of GFT or LQG is expected to describe the many-body physics, such as condensate states, of quantum spacetime ``atoms". See \cite{Ori07} for some early intuitions.  Remarkably, such a GFT condensate picture has been successfully applied to obtain a modified Friedmann equation in cosmology \cite{GOS13} and to explain the entropy of a quantum black hole \cite{OPS16}. A moment of reflection shows an obvious missing point in the GFT condensate picture of quantum gravity: there is no well-defined statistical mechanics for quantum gravitational states, since the Hamiltonian constraint of   pure gravity is not suitable for the conventional statistical-mechanical constructions, not to mention the lack of definitions of many thermodynamical quantities in quantum gravity. Nevertheless, recently in \cite{KO18} the equilibrium Gibbs states for GFT are constructed by alternative methods:  One of the methods is Jaynes' maximum entropy principle for constrained systems, which is further used in \cite{CKO19,Kot19} to study the generalized Gibbs states and background-independent statistical mechanics of quantum tetrahedra; the other method is based on the operator-algebraic Kubo-Martin-Schwinger (KMS) condition \cite{HHW67} and the algebraic formulation of GFT mentioned above. The Gibbs states in GFT are important for understanding the statistical mechanical aspects of quantum gravity in general, and they have advantages over other approaches to the equilibrium states in quantum gravity, such as \cite{AC16}, since in GFT  the collective ``many-body'' physics can be studied by many familiar field-theoretic techniques, which in turn helps the study of  classical limit for quantum gravity.

In this paper, we study the thermofield double (TFD) states in GFT based on the obtained Gibbs states of GFT. We take the operator-algebraic approach to Gibbs states, so that the algebraic TFD can be formulated  as in \cite{Oji81}. In particular, this algebraic approach to TFD uses Tomita-Takesaki modular theory to get the ``tilde" algebra $\tilde{\mathcal{M}}$ of TFD as the modular conjugate $\mathcal{M}'$ of the von Neumann algebra $\mathcal{M}$ of the original system. For a factor von Neumann algebra $\mathcal{M}$, $\mathcal{M}\neq\mathcal{M}'$ in general, but their  GNS representation Hilbert spaces are the same. 
The reason  is  that here the von Neumann algebra  $\mathcal{M}$ is constructed from the  full Fock states, instead of the local observable algebras in spacetime regions usually used in algebraic quantum field theories. By assuming the split property as in algebraic quantum field theory, we will see that the resulting intermediate type I factor can help us return to the conventional intuition of the TFD states in factorized Hilbert spaces \cite{Req13}.

The TFD states in GFT, when viewed as  states on the Fock vacuum, are in effect GFT condensate states. But  the TFD states now carry the equilibrium parameter $\beta$, the ``inverse temperature". Algebraically, the ``inverse temperature" $\beta$ also parametrizes the algebraic symmetry of the GFT algebraic states, which allows us to relate the ``thermal" behavior to the symmetry of quantum gravitational states, a link deeply implied by black hole thermodynamics.

We begin in Sec. \ref{s2} with the basic definitions of GFT, its algebraic formulation and the GFT Gibbs states. For the completeness of presentation, we include the essential proofs from \cite{KOT18,KO18}. In Sec. \ref{s3}, we define and study the TFD states of GFT using Tomita-Takeskai modular theory. By taking inspiration from the second-quantization interpretation of GFT, we first construct entangled squeezed states on the Weyl $C^*$-algebra of GFT and show that such algebraic squeezed states can be expressed in the standard form of TFD states \eqref{42} with a single set of generators. We then show  a graphic representation of  the obtained TFD states.
In Sec.\ref{s4}, we change to a diagrammatic presentation, and study the example of TFD shell condensate as a black hole horizon as well the sphere condensate for quantum cosmology to  discuss some speculative meanings of $\beta$ in this context. We summarizes this paper in Sec. \ref{s5}.
 In Appendix \ref{appA}, we  include another algebraic approach to the TFD extension of GFT based on the quantum deformation of the Hopf algebra of group fields. 
\section{Group field theories and Gibbs states}\label{s2}
Group fields are fields defined on $n$ copies of the group manifold of interest.
Formally GFTs are similar to tensor models, but the group fields have continuous indices, i.e. with infinite-dimensional tensor fields. The  physical input is the choice of the form of GFT action in which the simplicial structure of some models of quantum gravity can be encoded.

Let us work with the recent definitions given in \cite{CGOZ19}. Suppose the gauge group $G$ of interest is a locally compact Lie group with unimodular Haar measure $\mu$.
\begin{definition}
A group field $\phi$ is a $\mu^{\times n}$-integrable complex-valued function over the $n$-fold direct product of $G$,
\begin{equation}\label{1}
\phi: G^{\times n}\rightarrow\mathbb{C};\quad (g_1, g_2,..., g_n)\mapsto\phi(g_1,g_2,...,g_n)
\end{equation}
such that
\begin{equation}\label{2}
\int_{G^{\times n}}\phi(g_1,g_2,...,g_n)\overline{\phi(g_1,g_2,...,g_n)}d\mu^{\times n}<\infty.
\end{equation}
\end{definition}
Each group field $\phi$ can correspond to  a vector $\ket{\phi}$. These vectors  have inner products as in \eqref{2}
\begin{equation}
\braket{\phi|\phi'}=\int_{G^{\times n}}\phi(g_1,g_2,...,g_n)\overline{\phi'(g_1,g_2,...,g_n)}d\mu^{\times n}.
\end{equation}
By the Cauchy-Schwarz inequality and the condition \eqref{2}, this inner product $\braket{\phi|\phi'}$ is finite, so that the vectors $\ket{\phi}$ are in a Hilbert space $\mathcal{H}=L^2(G^{\times n},\mu^{\times n})$. The group fields $\phi$  can be recovered
 by assigning a linear functional
$\bra{g_1,...,g_n}$ in the dual space $\mathcal{H}^*$ to $\ket{\phi}$ such that\footnote{Notice that the group fields in this form $\braket{g_1,...,g_n|\phi}$ are single-``particle'' wave functions corresponding to the single-polyhedrons in the quantum geometries. The GFT field operators are constructed later in Sec.\ref{s2.1}.}
\begin{equation}\label{4}
\phi(g_1,...,g_n)=\braket{g_1,...,g_n|\phi}.
\end{equation}
The linear functionals $\bra{g_1,...,g_n}$ can be chosen to be multi-linear, that is, $\bra{g_1,...,g_n}=\bra{g_1}\otimes...\otimes\bra{g_n}$. In this case, the Hilbert space $\mathcal{H}$ can be factorized as
\begin{equation}\label{5}
\mathcal{H}=\bigotimes_{i=1}^n\mathcal{H}_i,\quad\mathcal{H}_i=L^2(G,\mu)_i
\end{equation}
where the vectors  $\ket{\phi_i}\in\mathcal{H}_i$ satisfy $\braket{g_i|\phi}=\braket{g_i|\phi_i}$. Notice that the factorization of $\ket{\phi}\in\mathcal{H}$ does not entail the factorization of the group field $\phi(g_1,...,g_n)$. In the special case where the group field $\phi$ is completely factorized, $\phi$ becomes the product $\prod_{i=1}^n\braket{g_i|\phi_i}$ of $n$  independent one-fold group fields. To avoid ambiguity, we assume in the following that the group fields $\phi$ cannot be factorized anymore.

To construct the action for GFT, consider first the change of  group elements in a group field $\phi(g_1,...,g_n)\rightarrow\phi(g'_1,...,g'_n)$ by the following transformation 
\begin{equation}
\phi(g'_1,...,g'_n)=\int_{G^{\times n}}\prod_{i=1}^nd\mu(g_i)C(g_1,...,g_n,g'_1,...,g'_n)\phi(g_1,...,g_n)
\end{equation}
where the integration kernel $C(g_1,...,g_n,g'_1,...,g'_n)$ is  the covariance. When expressed in terms of the Hilbert space representation \eqref{4}, the covariance $C$ is an endomorphism of the linear functionals, 
\begin{equation}
C:\bra{g_1,...,g_n}\mapsto\int_{G^{\times n}}\prod_{i=1}^nd\mu(g_i)C(g_1,...,g_n,g'_1,...,g'_n)\bra{g_1,...,g_n}.
\end{equation}
The integration kernel $C(g_1,...,g_n,g'_1,...,g'_n)$ thus encodes how the $g_i$ are transformed to $g'_i$.  
Then the kinematical term in the action can be formulated in analogy to the free  field theory:
\begin{definition}
Let $C$ be a covariance endomorphism on $\mathcal{H}^*$. A covariance $K$  is inverse to $C$ if $C\circ K={\bf1}_\mathcal{H}$. The  kinematical action of GFT is 
\begin{equation}\label{8}
S_0[\phi]=\frac{1}{2}\int_{G^{\times n}} \prod_{i=1}^nd\mu(g_i)\prod_{j=1}^nd\mu(g'_j)\phi(g_1,...,g_n)K(g_1,...,g_n,g'_1,...,g'_n)\overline{\phi(g'_1,...,g'_n)}.
\end{equation}
\end{definition}
With  \eqref{8}, the probability measure $D\phi\exp\{-S_0[\phi]\}$ becomes a multivariate Gaussian if $D\phi$ is chosen as  the  Lebesgue measure on $\mathbb{C}$. As in usual quantum field theories, an interaction term can be added to the free action, i.e. $S_0+\lambda S_{\text{int}}$ where $\lambda$ is a coupling constant. So far the GFT is defined in a general way. In the following, we restrict ourselves to the GFT that corresponds to the simplicial quantum gravity models:

Now the form of $S_{\text{int}}$ in GFT should be chosen in such a way that the Feynman diagrams obtained by perturbatively expanding the probability measure $D\phi\exp\{-S_0[\phi]-\lambda S_{\text{int}}[\phi]\}$ are dual to simplicial complexes of discrete quantum gravity.
\begin{definition}
Let $\phi_j(g_1,...,g_n)\equiv\phi(\{g_{i}^{(j)}\}),i=1,...,n;j=1,...,n+1$, be $(n+1)$ group fields defined on $G^{\times n}$. The interaction term in the action of GFT is
\begin{equation}\label{9}
S_{\text{int}}[\phi]=\frac{1}{n+1}\int_{G^{\times(n+1)}} \prod_{i\neq j=1}^{n+1}d\mu(g_i^{(j)})V(\{g_i^{(1)}\},...,\{g_i^{(n+1)}\}){\phi(g_i^{(1)})}...\phi(g_i^{(n+1)})
\end{equation}
where $V(\{g_i^{(1)}\},...,\{g_i^{(n+1)}\})$ is an integration kernel satisfying the closure constraints
\begin{equation}\label{10}
V(\{g_i^{(1)}\},...,\{g_i^{(n+1)}\})=\int_{G^{\times(n+1)}}\prod_{j=1}^{n+1}d\mu(h_j)V(\{h_1g_i^{(1)}\},...,\{h_{n+1}g_i^{(n+1)}\}),\quad\forall h_j\in G.
\end{equation}
\end{definition}
The Feynman diagrams in the leading order of the $\lambda$-expansion thus consist of the $n$-stranded graphs representing the free propagation, defined by $K$, of the $n$ group elements in the argument of $\phi(g_1,...,g_n)$. $(n+1)$ $n$-stranded graphs can meet at a vertex and the group fields get convoluted by $V$. The higher order terms in the $\lambda$-expansion then represent more complicated simplicial complexes.

Notice that the closure constraint \eqref{10} imposes gauge symmetry ($G$) at a given interaction vertex, which means only the gauge-invariant group fields contribute to the interactions in GFT. We can therefore  impose the global gauge invariance condition on group fields \eqref{1} and the kinematical kernel \eqref{8},
\begin{align}
\phi(g_1,...,g_n)=&\phi(hg_1,...,hg_n),\label{2.11}\\
K(\{g_i\},\{g'_i\})=&\int_{G^{\times2}}d\mu(h)d\mu(h')K(\{hg_i\},\{h'g'_i\}),
\end{align}
for $h,h'\in G$. The closure constraint \eqref{2.11} also  means that a single group field is dual to a polyhedron. To see this, let us take recourse from the noncommutative metric representation of group fields \cite{BO10}. A group field in the noncommutative metric representation  is obtained by the group Fourier transformation of the group fields, i.e.
\begin{equation}\label{2.13}
\hat{\phi}(\{x_i\})=\int\prod_i d\mu(g_i)\phi(\{g_i\})\prod_ie_{g_i}(x_i),\quad e_{g_i}(x_i)=e^{i\text{Tr}_{\mathfrak{g}}(x_i{g_i})},\quad g_i\in G,x_i\in\mathfrak{g}
\end{equation}
where the trace in the plane-wavefunction $e_g(x)$ is the trace on the Lie algebra $\mathfrak{g}$ of $G$ such that $\text{Tr}\tau_i\tau_j=-\delta_{ij}$ for generators $\tau_i$ of $\mathfrak{g}$. The Lie algebra elements $x_i$ can be associated to the vectors normal to the faces of a polyhedron in that $\abs{x_i}=\sqrt{\text{Tr}_{\mathfrak{g}}(x_i,x_i)}$ defines the area and $x_i/\abs{x_i}$ defines the unit normal vector of a face. In this representation the gauge-invariance of $\phi(\{g_i\})$ translates to the closure condition $\sum_ix_i=0$ of a polyhedron. Thus, the group fields convoluted by $V$ represent a 2-complexes containing these polyhedra.

The GFT thus defined can generate the amplitude of simplicial quantum gravity.
\begin{example}
In the Boulatov-Ooguri model \cite{Bou92,Oog92} over $G$, the choices of $K$ and $V$ are respectively
\begin{align}
K(\{g_i\},\{g'_i\})=&\int_{G}d\mu(h)\prod_{i=1}^n\delta(hg_i\bar{g}_i^{\prime}),\\
V(\{g_i^{(1)}\},...,\{g_i^{(n+1)}\})=&\int_{G}\prod_{i=1}^{n}d\mu(h_i)\prod_{i<j}\delta(h_ig_i^{(j)},h_j\bar{g}_i^{(j)}).
\end{align}
When $n=4$ and $G=SO(4)$, the convolution of $4$-stranded graphs at a vertex is dual to a 4-dimensional simplicial complex  (but not a polyhedron). The Feynman amplitude gives us the $15j$ symbol in the spin basis.
\end{example}
\begin{example}
Pithis {\it et al.}  \cite{PST16} choose the $K$ and $V$  as  
\begin{align}
K(\{g_i\},\{g'_i\})=&\prod_{i=1}^n\delta(g_i\bar{g}_i^{\prime})\Bigl[-\sum_i\Delta_{g_i}+m^2\Bigr],\\
V(\{g_i^{(1)}\},...,\{g_i^{(n+1)}\})=&\sum_{m=2}^{n+1}\frac{1}{n-1}{\phi^\dag(g_i^{(1)})}...\phi^\dag(g_i^{(m)})\prod_{m+1}^{n+1}\delta(\phi(g_i^{(m)}),1)\label{2.17}
\end{align}
where $\Delta_{g_I}$ is the Laplacian on the group manifold and the constant $m$ can be related to the spin foam edge weight. The $V$ in \eqref{2.17} give a tensorial nonlinear interaction term. 
\end{example}
The way in which $S_{\text{int}}$ enter the partition function implies that the interaction term is an observable in the free theory, i.e. $D\phi e^{-S_0}\lambda S_{\text{int}}$. Therefore general observables in GFT can be defined similarly as specific convolutions of group fields \cite{ORT15}:
\begin{definition}
A trace observable of GFT is a functional of the group fields with all the group elements are traced over, that is,
\begin{equation}\label{15}
O[\phi]=\int_{G^{J}} \prod_{i,j}d\mu(g_i^{(j)})B(\{g_i^{(j)}\})\prod_{i,j}{\phi(g_i^{(j)})}
\end{equation}
where  $j\in J$ not necessarily bounded by $(n+1)$ and $B$ is an integration kernel encoding the ways of convolution or tracing. A {\it partial} trace observable is a functional of the group fields with parts of the group elements are traced over, thereby being a functional $O_p[\phi,g]$ of both group fields and the untraced group elements.
\end{definition}
\begin{example}\label{2.6}
Consider the spin network states in LQG as GFT observables \cite{Ori16,Ori17}. The streamlined structure of a spin network graph $\gamma$ in LQG consists of a set $\mathcal{E}$ of edges colored with SU(2) spins $j$ and a set $\mathcal{V}$ of $n$-valent vertices to each of which is assigned an intertwiner operator $I$. Then $n$ colored edges are contracted with an $n$-valent vertex to form a spin network graph. Each vertex $v\in\mathcal{V}$ carries a quantum state $\ket{I_v}$ that represents a quantum polyhedron in the dual quantum geometry, and on each edge the spin-$j$ representation space can be chosen as a Hilbert space $\mathsf{H}_j$. Because $v$ is $n$-valent,  the intertwiner states $\ket{I_v}$  live in the Hilbert space 
\begin{equation}\label{16}
\mathsf{H}_{v}\equiv\bigotimes_{i=1}^n\mathsf{H}_{j_i,v}
\end{equation}
where $\mathsf{H}_{j_i,v}$ is the Hilbert space of the $i$-th spin-$j_i$ edge attached to the vertex $v$. Comparing \eqref{5} and \eqref{16}, we see that the intertwiner state   $\ket{I_v}$   corresponds to a GFT state $\ket{\phi}$. The difference is that $\ket{I_v}$ will be projected on a spin basis $\bra{\{j\}},j\in\mathbb{N}/2$  to obtain a spin network wavefunction, while $\ket{\phi}$ will be projected on a group basis$\bra{\{g_i\}},g_i\in G$ to obtain a group field.\footnote{Note that the group basis replaces the spin basis in the dual loop quantization based on Dittrich-Geiller vacuum with additional flatness constraints imposed on each closed loop or on the intertwiner states on each vertices \cite{DFG18}. In the case of GFT, these flatness constraints are not imposed, thereby covering also the intermediate ``squeezed" cases.
} 
 By writing $\ket{I_v}=\otimes_i\ket{j_i,v}$, we have the separate state $\otimes_v\otimes_i\ket{j_i,v}$ for all untraced vertices in $\mathcal{V}$. Next, a complete contraction of open edges can be made by matching the spins on a pair of open edges, which can be simply imposed on the spin basis as $\delta(j,j')\bra{j}\otimes\bra{j'}$. Therefore a spin network wavefunction on a closed graph $\gamma$ is
\begin{equation}\label{17}
\Psi_{\gamma,j,I_v}=\sum_{i=1}^n\sum_{\{j_i\}}M(j_i,j'_i)\delta(j_i,j'_i)\prod_{v\neq v'\in\mathcal{V}}\braket{j'_i,v'|j_i,v}
\end{equation}
where we have labelled the contracted open edges with the same $i$ and the $M$'s are elements of the representation matrix. Lifting the spin basis and intertwiner states respectively to the group basis and GFT states,\footnote{A traditional way of lifting is using the Peter-Weyl theorem to relate the group representation $\rho(g)$ to the discrete indices of the spin-$j$ representations.} 
we obtain the spin network observable in GFT in the form of \eqref{15} with the gauge-invariant gluing kernel
\begin{equation}
B(\{g_i^{(j)}\})=\int_{G}\prod_{i=1}^{n}d\mu(h_i)\prod_{i, j\neq j'}\delta(h_ig_i^{(j)},h_i\bar{g}_i^{(j')})M(h_ig_i^{(j)},h_i\bar{g}_i^{(j')})
\end{equation}
with some coefficients $M$.
When the spin network graph is not closed,  we have after contraction  spin network states on the remaining open edges. These ``boundary" spin network states have a tensor network representation \cite{COZ18},
\begin{equation}
\ket{\Psi_{\partial\gamma}}=\bigotimes_{e\in \mathcal{E}-\partial\gamma}\bra{M_e}\bigotimes_{v\in\mathcal{V}}\ket{I_v}
\end{equation}
where $\bra{M_e}$ represents  the contraction along the edge $e$.
\end{example}
\subsection{Algebraic formulation of group fields}\label{s2.1}
From Example \ref{2.6} we see that the observables in GFT are built from multiple group fields, so the second quantization formalism  becomes a natural language for studying these many GFT degrees of freedom. The resulting second quantization formulation is closely related to LQG \cite{Ori16,KOT18}. 

It is instructive to adopt the quantum geometric interpretation of a group field, namely a gauge-invariant group field, similar to a vertex in a spin network graph, corresponds to a dual quantum polyhedron. One  defines a GFT vacuum state $\ket{\Omega}$ without any quantum geometric excitation. Then a (gauge-invariant) group field $\phi$ and its conjugate $\phi^\dag$ respectively annihilates and creates a quantum polyhedron,
\begin{equation}\label{20}
\phi^\dag(\{g_i\})\ket{\Omega}=\ket{\{g_i\}},\quad \phi(\{g_i\})\ket{\Omega}=0.
\end{equation}
These creation and annihilation operators satisfy the {\it  bosonic} canonical commutation relations (CCR),
\begin{equation}
[\phi(\{g_i\}),\phi(\{g_j\})^\dag]=\delta(\{g_i\},\{g_j\}).
\end{equation}
These operators define a bosonic Fock space
\begin{equation}\label{22}
\mathcal{H}_{\text{Fock}}=\bigoplus_{N>0}\text{sym}\mathcal{H}^{\otimes N}
\end{equation}
where $\mathcal{H}$ is given in \eqref{5}. The occupation number basis, the field operators and in this Fock space and the many-body operators for the action can be formulated in the usual manner \cite{Ori16}.

To introduce the algebraic formulation of GFT, let us write down  the field operators in the group basis
\begin{equation}\label{23}
\Psi(\psi)=\int_{G^{\times n}}\prod_{i=1}^nd\mu(g_i) \overline{\psi(\{g_i\})} \phi(\{g_i\}),\quad \Psi(\psi)^\dag=\int_{G^{\times n}}\prod_{i=1}^nd\mu(g_i){ \psi(\{g_i\})} \phi(\{g_i\})^\dag
\end{equation}
where the $\psi(\{g_i\})$ are single-polyhedron wavefunctions. Namely, the group fields are field-operator-valued distributions on the space of ``single-particle" wavefunctions. These field operators $\Psi$ inherits the CCR algebra of $\phi$ with the $\delta$-function replaced by the  $L^2$ inner product of the $\psi$'s
\begin{equation}\label{24}
[\Psi(\psi),\Psi(\psi)^\dag]=\int_{G^{\times n}}\prod_{i=1}^nd\mu(g_i) \overline{\psi(\{g_i\}) }\psi(\{g'_i\})\equiv(\psi,{\psi}').
\end{equation}
Notice that the above CCR algebra is defined with respect to the single-body space $\mathcal{H}^{(1)}$, but it is the same as the CCR algebra obtained from the full Fock space \cite{Emc72}.
The inner product $(\psi,{\psi}')$ in \eqref{24} induces a symplectic form $\text{Im}(\psi,\psi')$ on the space of test functions $\psi$, and hence a Weyl algebra for GFT can be defined on the thus obtained phase space \cite{KOT18}. Here we recollect a more direct description:
\begin{proposition}
 Let $\Psi(\psi),\Psi(\psi)^\dag$  be the GFT field operators as in \eqref{23}. Then the exponentiated operators $W(\psi)=e^{\frac{i}{\sqrt{2}}(\Psi(\psi)+\Psi(\psi)^\dag)}$ form a  Weyl $C^*$-algebra.
\end{proposition}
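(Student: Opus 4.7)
The plan is to verify the canonical Weyl relations for the exponentiated operators $W(\psi)$ and then recognize the closure of the resulting $*$-algebra as a Weyl $C^*$-algebra. First I would introduce the Hermitian field combination $\Phi(\psi)=\frac{1}{\sqrt 2}(\Psi(\psi)+\Psi(\psi)^\dag)$, so that $W(\psi)=e^{i\Phi(\psi)}$, and use the CCR \eqref{24} together with the linearity of $\psi\mapsto\Psi(\psi)$ to obtain
\[
[\Phi(\psi),\Phi(\psi')]=\tfrac{1}{2}\bigl((\psi,\psi')-(\psi',\psi)\bigr)\mathbf{1}=i\,\mathrm{Im}(\psi,\psi')\,\mathbf{1}.
\]
This is the crucial input: the commutator is a scalar multiple of the identity, and the symplectic form $\sigma(\psi,\psi'):=\mathrm{Im}(\psi,\psi')$ already identified in the text is exactly the one that governs the Weyl multiplication.

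Second, I would address the subtlety that $\Phi(\psi)$ is unbounded on $\mathcal{H}_{\text{Fock}}$. Working on the dense subspace of finite-particle vectors, I would invoke Nelson's analytic vector theorem to conclude that $\Phi(\psi)$ is essentially self-adjoint there, so that $W(\psi)=e^{i\overline{\Phi(\psi)}}$ is a well-defined bounded unitary on $\mathcal{H}_{\text{Fock}}$. This yields immediately $\|W(\psi)\|=1$, $W(\psi)^*=W(-\psi)$, and $W(0)=\mathbf{1}$.

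Third, with self-adjointness in hand, the Weyl multiplication law follows from the Baker--Campbell--Hausdorff identity in its Weyl form, which applies precisely because the commutator above is central:
\[
W(\psi)W(\psi')=e^{-\frac{i}{2}\mathrm{Im}(\psi,\psi')}\,W(\psi+\psi').
\]
The norm closure of the $*$-algebra generated by $\{W(\psi):\psi\in\mathcal{H}^{(1)}\}$ inside $\mathcal{B}(\mathcal{H}_{\text{Fock}})$ is then a $C^*$-algebra whose generators obey the canonical Weyl relations over the symplectic space $(\mathcal{H}^{(1)},\sigma)$; by Slawny's uniqueness theorem for the Weyl $C^*$-algebra over a nondegenerate symplectic vector space, this closure is isomorphic to the abstract Weyl algebra.

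The main obstacle I expect is analytical rather than algebraic: applying BCH to unbounded operators is not automatic, so one must either justify it via analytic-vector technology on the finite-particle domain of $\mathcal{H}_{\text{Fock}}$ or, alternatively, argue directly at the level of the bounded unitaries using Stone's theorem and strong-operator continuity of the one-parameter groups $t\mapsto e^{it\Phi(\psi)}$. A secondary subtlety is a possible degeneracy of $\sigma$ on the full $L^2$-space, but restricting to gauge-invariant test functions keeps the symplectic structure well defined, so no new difficulty arises. Everything else follows routinely from the CCR and the linear structure of $\psi\mapsto\Psi(\psi)$.
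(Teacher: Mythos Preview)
Your argument is correct and follows essentially the same route as the paper: define $\Phi(\psi)$, compute $[\Phi(\psi),\Phi(\psi')]=i\,\mathrm{Im}(\psi,\psi')$, and apply Baker--Campbell--Hausdorff to obtain the Weyl multiplication law. The paper's proof is terser and does not address the unboundedness of $\Phi(\psi)$; it simply asserts unitarity of $W(\psi)$ from Hermiticity of $\Phi(\psi)$ and then equips the Weyl algebra with the universal $C^*$-norm $\|W\|_{C^*}=\sup_{\pi}\|\pi(W)\|$ taken over all Hilbert-space representations. Your version is more analytically careful---invoking Nelson's theorem for essential self-adjointness on finite-particle vectors and Slawny's uniqueness theorem to identify the Fock-representation closure with the abstract Weyl algebra---which buys you a rigorous justification of the BCH step and a clean identification of the resulting algebra, at the cost of citing heavier machinery than the paper deems necessary.
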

\begin{proof}
Denoting $\Phi(\psi)=\frac{1}{\sqrt{2}}(\Psi(\psi)+\Psi(\psi)^\dag)$, we have $[\Phi(\psi_1),\Phi(\psi_2)]=i\text{Im}(\psi_1,\psi_2)$. Then by the Baker-Hausdorff formula, we see that $W(\psi)$ satisfy the defining relation for a Weyl algebra,
\begin{equation}
W(\psi_1)W(\psi_2)=e^{-\frac{i}{2}\text{Im}(\psi_1,\psi_2)}W(\psi_2+\psi_2).
\end{equation}
$W(\psi)$ is unitary, since $\Phi(\psi)$ is Hermitian. Hence $W(\psi)$'s form a Weyl algebra $\mathcal{W}$. The Hermitian conjugation defines the involution. The  $W$'s as bounded linear functionals can be represented on some Hilbert space $\mathcal{K}$, so we can assign a $C^*$-norm defined on the irreducible representations of $\mathcal{K}$,
\begin{equation}
\norm{W}_{C^*}=\sup_{\pi_{\mathcal{K}}}\norm{\pi_{\mathcal{K}}(W)}_{\mathcal{K}}=\sup_{\pi}\sqrt{(\pi_{\mathcal{K}}(W),\pi_{\mathcal{K}}(W))_\mathcal{K}},
\end{equation}
to $\mathcal{W}$, thereby making it a Weyl $C^*$-algebra.
\end{proof}
The Fock space structure of GFT can be recovered from the GNS representation $(\mathcal{H}_{\text{Fock}},\pi_F,\ket{\Omega})$ of $\mathcal{W}$. Here the GNS Hilbert space and the vacuum state are the same as the Fock space \eqref{22} and respectively the vacuum in \eqref{20} if the algebraic states on $\mathcal{W}$ are given by the quasi-free states
\begin{equation}\label{27}
\omega_F(W(\psi))=\braket{\Omega|\pi_F(W(\psi))|\Omega}=\braket{\Omega|e^{i\Phi(\psi)}|\Omega}=e^{-\frac{(\psi,\psi)}{4}}
\end{equation}
where we have used $\norm{\Phi(\psi)\ket{\Omega}}^2_{\mathcal{H}_{\text{Fock}}}=\frac{(\psi,\psi)}{2}$.
The important point is that the representation $\pi_F(W(\psi))=e^{i\Phi(\psi)}$ only  represents the {\it pure} states in $\mathcal{H}_{\text{Fock}}$ created by the field operator $\Phi$, thereby making the representation irreducible. The bicommutant of $\pi_F(W(\psi))$ is then the whole space $\pi_F^{\prime\prime}(W(\psi))=\mathcal{B}(\mathcal{H}_{\text{Fock}})$ of  bounded linear functionals, which is a von Neumann algebra. We thus have, alternative to the Weyl $C^*$-algebra $\mathcal{W}$, an  algebraic description of GFT with the von Neumann algebra $\mathcal{B}(\mathcal{H}_{\text{Fock}})$. 

We remark that working with the von Neumann algebra is  advantageous in that the interactions and dynamics can be studied algebraically through the Tomita-Takesaki modular theory \cite{ZW16}. Now that at a finite number $N$ of excitations in a subspace $\mathcal{H}_N$, $\mathcal{B}(\mathcal{H}_{N})=\bigoplus_{n=1}^N\mathcal{B}(\mathcal{H}_n)$ is a finite factor, there exist regular normal tracial states on $\mathcal{B}(\mathcal{H}_{N})$, whereby we have the statistical mixture of the Fock states: $\omega_\rho[W]=\text{Tr}(\rho\pi_F(W))$ where $\rho$ is a trace class density operator. For the infinite $N$ case, we can consider the  Weyl algebra $\mathcal{W}$ as a quasi-local algebra, i.e. the weak closure $\overline{\cup_N\mathcal{B}(\mathcal{H}_{N})}$, then  the normal states in $\mathcal{H}_{\text{Fock}}$ can be obtained by the $C^*$-inductive limit of the local normal states in the subspace $\mathcal{H}_N$, i.e. the local normal folium of states. In the following, the tracial states on $\mathcal{B}(\mathcal{H}_{\text{Fock}})$ should be understood in this sense.

\subsection{KMS condition and Gibbs states}
Working with the algebraic formulation of GFT as recalled in Sec.\ref{s2.1}, we can now apply the well-known algebraic theory of quantum statistical mechanics to GFT. 

In the algebraic formulation of a thermal  physical system, the equilibrium states are the  states satisfying the algebraic KMS condition \cite{HHW67}:
\begin{definition}
Let $\alpha_t$ be a one-parameter group of automorphisms of a $C^*$-algebra $\mathcal{A}$ (or a von Neumann algebra $\mathcal{M}$). Let $\omega[A],A\in\mathcal{A}$ be the algebraic states on $\mathcal{A}$. The KMS condition is
\begin{equation}
\omega[A\alpha_{t+i\beta}(B)]=\omega[\alpha_t(B)A], \quad t,\beta\in\mathbb{R}.
\end{equation}
\end{definition}
The algebraic states satisfying the KMS condition are KMS states.
The KMS  states are stationary with respect to the transformations or flows from a one-parameter group of algebraic automorphisms, i.e. $\omega[\alpha_t(A)]=\omega[A]$. Here $\beta$ is a flow parameter which is  not necessarily the inverse temperature.

In GFT, an existing one-parameter group of algebraic automorphisms consists of the left (or right) translations on $G$. Consider the collective left translations on $G^{\times n}$,
\begin{equation}
L_{\{g'_i\}}: G\rightarrow G;\quad (g_1,...,g_n)\mapsto(g'_1g_1,...,g'_ng_n)
\end{equation}
where the ``one-parameter" is the collective $\{g'_i\}=(g'_1,...,g'_n)$.
They can be pulled back to the space of ``single-polyhedron" wavefunctions $\psi(\{g_i\})$ as $L^*_{\{g'_i\}}\psi(\{g_i\})=\psi(L_{\{g^{\prime-1}_i\}}(\{g_i\}))$, thereby inducing a flow on the Weyl elements
\begin{equation}\label{29}
\alpha_{\{g'_i\}}:\mathcal{W}\rightarrow\mathcal{W};\quad W(\psi)\mapsto\alpha_{\{g'_i\}}(W(\psi))=W\bigl(L^*_{\{g'_i\}}\psi(\{g_i\})\bigr)
\end{equation}
which is a *-automorphism on $\mathcal{W}$. Because the von Neumann algebra of interest is the whole $\mathcal{B}(\mathcal{H}_{\text{Fock}})$, the *-automorphisms $\alpha_{\{g_i\}}$ can be represented as unitary transformations $U(g),g\in G$, on $\mathcal{H}_{\text{Fock}}$   by the bounded linear transformation theorem,
\begin{equation}\label{30}
U(\{g'\})\Psi(\{g_i\})U(\{g'\})^{-1}=\Psi(\{g'_ig_i\}).
\end{equation}
Such a unitary operator $U$ admits a Hertmitian generator $\mathcal{G}$ such that $U(t)=e^{it\mathcal{G}}$ for some parameter $t\in\mathbb{R}$. Not surprisingly, this $\mathcal{G}$ gives rise to the canonical Gibbs form of KMS states \cite{KO18}:
\begin{lemma}\label{2.9}
Let $\pi_F(W)$ be the Fock GNS representation map for $W\in\mathcal{W}$ and $\omega_\rho[W]=\text{Tr}(\rho \pi_F(W))$ be a mixed algebraic state on $\mathcal{W}$ with a density operator $\rho$. If the $\omega_\rho[W]$'s are KMS states with respect to an one-parameter group of automorphisms  $\alpha_{t}$, then $\rho\propto e^{-\beta\mathcal{G}}$ for some $\beta\in\mathbb{R}$.
\end{lemma}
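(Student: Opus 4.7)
The plan is to reduce the algebraic KMS condition to an operator identity on $\mathcal{B}(\mathcal{H}_{\text{Fock}})$ by exploiting the representation $\omega_\rho[W]=\mathrm{Tr}(\rho\,\pi_F(W))$, and then use the irreducibility of $\pi_F$ (established in the previous subsection) to pin down $\rho$.

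First I would translate the automorphism $\alpha_t$ into the Fock picture. By \eqref{30}, $\alpha_t$ is unitarily implemented by $U(t)=e^{it\mathcal{G}}$, so for any $W\in\mathcal{W}$ one has $\pi_F(\alpha_t(W))=e^{it\mathcal{G}}\pi_F(W)e^{-it\mathcal{G}}$. Restricting to a dense $*$-subalgebra of entire analytic elements for $\alpha_t$, the analytic continuation $\alpha_{i\beta}(B)=e^{-\beta\mathcal{G}}\,B\,e^{\beta\mathcal{G}}$ is well-defined at the operator level. Setting $t=0$ in the KMS condition and writing $A=\pi_F(W_1)$, $B=\pi_F(W_2)$ yields
\begin{equation}
\mathrm{Tr}\!\left(\rho\,A\,e^{-\beta\mathcal{G}}B\,e^{\beta\mathcal{G}}\right)=\mathrm{Tr}(\rho\,BA).
\end{equation}

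Next I would use cyclicity of the trace on both sides to obtain $\mathrm{Tr}(e^{\beta\mathcal{G}}\rho A e^{-\beta\mathcal{G}} B)=\mathrm{Tr}(A\rho B)$ for all admissible $A,B$ in the algebra. Since these elements generate a weakly dense subspace of $\mathcal{B}(\mathcal{H}_{\text{Fock}})$ (the Weyl elements are weakly total) and trace-pairing against a dense set determines a trace-class operator uniquely, this forces the operator identity $e^{\beta\mathcal{G}}\rho A e^{-\beta\mathcal{G}}=A\rho$ for every $A$ in the algebra. Choosing $A=\mathbf{1}$ immediately gives $e^{\beta\mathcal{G}}\rho e^{-\beta\mathcal{G}}=\rho$, so $\rho$ commutes with $\mathcal{G}$ (hence with $e^{\beta\mathcal{G}}$). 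Substituting this back into the identity produces $(e^{\beta\mathcal{G}}\rho)\,A=A\,(e^{\beta\mathcal{G}}\rho)$ for every $A$ in the range of $\pi_F$.

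Finally I would invoke irreducibility of the Fock GNS representation $\pi_F$, recalled just after \eqref{27}, namely $\pi_F(\mathcal{W})''=\mathcal{B}(\mathcal{H}_{\text{Fock}})$. By Schur's lemma the commutant is trivial, so $e^{\beta\mathcal{G}}\rho=c\,\mathbf{1}$ for some $c\in\mathbb{C}$. Positivity of $\rho$ gives $c>0$, hence $\rho=c\,e^{-\beta\mathcal{G}}$, which is the claimed Gibbs form. Normalization $\mathrm{Tr}\,\rho=1$ then fixes $c=Z(\beta)^{-1}$ with $Z(\beta)=\mathrm{Tr}\,e^{-\beta\mathcal{G}}$ whenever the partition function exists.

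The main obstacle is the functional-analytic bookkeeping around the unbounded generator $\mathcal{G}$: one must ensure that $\alpha_{i\beta}$ actually extends analytically on a sufficiently rich subalgebra (e.g.\ the $\alpha$-entire Weyl elements whose test functions are smeared with Gaussians against the group translation flow), and that the cyclicity manipulation is legitimate with $\rho$ trace class and $e^{\pm\beta\mathcal{G}}$ possibly unbounded. These are standard but non-trivial hypotheses; on finite-excitation subspaces $\mathcal{H}_N$ they are automatic, and the full statement follows from the $C^*$-inductive limit procedure mentioned at the end of Section \ref{s2.1}.
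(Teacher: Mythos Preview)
Your proposal is correct and follows essentially the same route as the paper: write the KMS condition at $t=0$ in the Fock representation, use cyclicity of the trace to extract an operator identity, and then invoke irreducibility of $\pi_F$ (i.e.\ $\pi_F(\mathcal{W})'=\mathbb{C}\mathbf{1}$) to conclude $e^{\beta\mathcal{G}}\rho\propto\mathbf{1}$. The only cosmetic difference is that the paper eliminates the $A$-variable first, obtaining $\rho\,\pi_F(W')=e^{-\beta\mathcal{G}}\pi_F(W')e^{\beta\mathcal{G}}\rho$ and hence $[e^{\beta\mathcal{G}}\rho,\pi_F(W')]=0$ in one step, whereas you eliminate $B$ first and then need the intermediate observation $[\rho,e^{\beta\mathcal{G}}]=0$ (via $A=\mathbf{1}$) before reaching the same commutator; your added remarks on analytic vectors and the trace-class bookkeeping are sound refinements that the paper leaves implicit.
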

\begin{proof}
The KMS condition for  $\omega_\rho[W]$ is $\omega_\rho[W\alpha_{t+i\beta}(W')]=\omega_\rho[\alpha_{t}(W')W]$. When $t=0$, this becomes
\[\text{Tr}(\rho\pi_F(W')\pi_F(W))=\text{Tr}(\rho\pi_F(W)\pi_F(\alpha_{i\beta}W')).
\]
Using $W=e^{i\Phi}, U=e^{it\mathcal{G}}$ and \eqref{30}, we have
\[\text{Tr}(\rho\pi_F(W')\pi_F(W))=\text{Tr}(\rho\pi_F(W)e^{-\beta\mathcal{G}}\pi_F(W')e^{\beta\mathcal{G}})=\text{Tr}(e^{-\beta\mathcal{G}}\pi_F(W')e^{\beta\mathcal{G}}\rho\pi_F(W)).
\]
Therefore $\rho\pi_F(W')=e^{-\beta\mathcal{G}}\pi_F(W')e^{\beta\mathcal{G}}\rho$, which entails $[e^{\beta\mathcal{G}}\rho,\pi_F(W')]=0$ or $e^{\beta\mathcal{G}}\rho\in\pi'_F(\mathcal{W})$. Since $\pi_F^{\prime\prime}(W(\psi))=\mathcal{B}(\mathcal{H}_{\text{Fock}})$  and $\pi_F$ is irreducible, we have  that $\pi'_F(\mathcal{W})\propto{\bf1}$. Hence $\rho\propto e^{-\beta\mathcal{G}}$.
\end{proof}
Next, to apply the above result to the collective left translations \eqref{29},  we suppose that $G$ is connected, which holds  for all groups used in GFT models related to spin foams and simplicial quantum
gravity \cite{KO18}:
\begin{lemma}\label{l2.10}
Let $G$ be a connected Lie group and $\mathfrak{g}$ be its Lie algebra. The KMS states  (i.e. density operators) $\rho_X$ with respect to the collective left translations $\alpha_{\{g'_i\}},g'_i\in G$  have the canonical Gibbs form
\begin{equation}\label{g}
 \rho_X=\frac{1}{Z}e^{-\beta\mathcal{G}_X},\quad Z=\text{Tr}e^{-\beta\mathcal{G}_X},\mathcal{G}_X=iU_*(X),\quad\beta\in\mathbb{R}, X\in\mathfrak{g}
\end{equation}
where $U_*$ is the anti-Hermitian representation of $\mathfrak{g}$.
\end{lemma}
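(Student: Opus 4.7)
The plan is to reduce this to Lemma \ref{2.9} by carving out a genuine one-parameter subgroup from the collective left-translation flow \eqref{29} for each Lie algebra element $X \in \mathfrak{g}$. Since $G$ is connected, the exponential map $\exp : \mathfrak{g} \to G$ produces, for every $X \in \mathfrak{g}$, a smooth one-parameter subgroup $t \mapsto \exp(tX)$ of $G$. Lifting this diagonally to $G^{\times n}$, so that every slot is translated by the same element $\exp(tX)$, yields a one-parameter family of group elements $\{g'_i(t)\} = (\exp(tX),\ldots,\exp(tX))$ and, through \eqref{29}, a one-parameter group of $*$-automorphisms $\alpha^X_t := \alpha_{\{\exp(tX)\}}$ of the Weyl algebra $\mathcal{W}$, which extends to $\mathcal{B}(\mathcal{H}_{\text{Fock}})$.

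Next, I transport this flow to $\mathcal{H}_{\text{Fock}}$ via the unitary implementation \eqref{30}. The operators $V_X(t) := U(\{\exp(tX)\})$ form a strongly continuous one-parameter unitary group, and Stone's theorem supplies a self-adjoint generator $\mathcal{G}_X$ with $V_X(t) = e^{it\mathcal{G}_X}$. Differentiating $V_X(t)$ at $t=0$ identifies this generator with the anti-Hermitian Lie algebra representation $U_* : \mathfrak{g} \to \mathcal{B}(\mathcal{H}_{\text{Fock}})$ obtained from $U$, giving $\mathcal{G}_X = iU_*(X)$ (manifestly self-adjoint because $U_*(X)$ is anti-Hermitian). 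The hypothesis of Lemma \ref{2.9} is then satisfied for $\alpha^X_t$ with generator $\mathcal{G}_X$, and its conclusion yields $\rho_X \propto e^{-\beta \mathcal{G}_X}$ for some $\beta \in \mathbb{R}$. Imposing the normalization $\text{Tr}(\rho_X) = 1$ fixes the proportionality constant through $Z = \text{Tr}(e^{-\beta \mathcal{G}_X})$, giving the Gibbs form \eqref{g}.

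The main obstacle is the interplay between the algebraic flow and its analytic implementation in the infinite-dimensional Fock setting. Stone's theorem requires strong continuity of $V_X(t)$, which relies on the continuity of the unitary representation $U$ of $G$ and on $G$ being a Lie group so that $t \mapsto \exp(tX)$ is smooth; connectedness is essential here since any element of $G$ outside the identity component cannot be reached by a one-parameter subgroup, so the Gibbs characterization purely in terms of $X \in \mathfrak{g}$ would otherwise miss part of the symmetry. A more delicate technical point is the trace-class nature of $e^{-\beta \mathcal{G}_X}$ needed for the partition function $Z$ to be finite, which restricts the admissible flow parameters $\beta$ and is handled via the quasi-local/inductive-limit construction of tracial states on $\mathcal{B}(\mathcal{H}_{\text{Fock}})$ described at the end of Sec.\ref{s2.1}.
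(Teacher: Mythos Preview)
Your proof is correct and follows essentially the same route as the paper: extract a one-parameter subgroup via the exponential map for each $X\in\mathfrak{g}$, pass to the strongly continuous unitary implementation, invoke Stone's theorem to obtain the self-adjoint generator $\mathcal{G}_X=iU_*(X)$, and then apply Lemma~\ref{2.9}. The additional remarks you give on the role of connectedness and on the trace-class issue via the quasi-local construction are helpful elaborations but do not depart from the paper's argument.
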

\begin{proof}
Consider $X\in\mathfrak{g}$, then $X$ can be mapped to a $g_X(t)=e^{tX}\in G$ through the exponential map, where $t\in\mathbb{R}$ is a parameter such that $g_X(0)=1_G, (dg_X/dt)|_{t=0}=X$. The left translations on G become $L_{g_X}=L_{e^{tX}}=g_X(\cdot)$. Let $U:G\rightarrow \mathcal{U}_\mathcal{{H}'}$ be a strongly continuous unitary representation of $G$ by the unitary operators on some Hilbert space $\mathcal{{H}'}$, then the composition $U_X=U\circ g_X$ is a strongly continuous one-parameter ($t$) group of unitary operators in $\mathcal{U}_\mathcal{{H}'}$. Hence $U_X$ is a strongly continuous unitary representation of $\alpha_g$.
By Stone's theorem, we have $U_X=e^{-i\mathcal{G}_Xt}$ for some Hermitian generator $\mathcal{G}_X$ from  $\mathcal{{H}'}$. Applying Lemma \ref{2.9} then gives the Gibbs states. Finally, the  anti-Hermitian $U_*(X)$ comes from $U_X(t)=U(e^{tX})=e^{t(U_*(X))}=e^{-it\mathcal{G}_X}$.
\end{proof}
Finally, in the case of GFT, we have $\mathcal{{H}'}=\mathcal{H}_{\text{Fock}}$ and the connected group is $G^{\times n}$. Notice that the collective left translations on $G$, when written as $g_{X_i}(\cdot)=e^{tX_i}(\cdot)$,  have right-invariant generators $X$, otherwise the (global) gauge invariance will be violated when acting different $g_i$. We therefore have
\begin{theorem}[Kotecha-Oriti \cite{KO18}]
Let $\mathcal{H}_{\text{Fock}}$ be the  Fock space for a GFT on $G^{\times n}$, and let $\alpha_{\{g_i\}}$ be the collective left translations on $G^{\times n}$, then the KMS states $\rho_g$ with respect to $\alpha_{\{g_i\}}$ take the canonical Gibbs form $\rho_g=e^{-i\beta U_*(X,g)}/Z, \beta\in\mathbb{R}$ where
\begin{equation}\label{u}
U_*(X,g)=\int_{G^{\times n}}\prod_{i=1}^nd\mu(g_i) \overline{\phi(\{g_i\})}\mathcal{L}_{R_{g^*}X} \phi(\{g_i\})
\end{equation}
with $\mathcal{L}_{R_{g^*}X}$ being the Lie derivative along the right-invariant generators $R_{g^*}X$ of $L_{g_X}$ for $X\in\mathfrak{g}$.
\end{theorem}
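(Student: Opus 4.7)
The plan is to invoke Lemma \ref{l2.10} on the enlarged connected Lie group $G^{\times n}$ and then extract the explicit form of the generator by differentiating the Fock-space unitary representation of the collective left translations. Since $G$ is connected by hypothesis, so is $G^{\times n}$, with Lie algebra $\mathfrak{g}^{\oplus n}$ and componentwise exponential map. The automorphism $\alpha_{\{g'_i\}}$ of \eqref{29} is implemented on $\mathcal{H}_{\text{Fock}}$ by the unitaries $U(\{g'_i\})$ specified in \eqref{30}, so Lemma \ref{l2.10} applies verbatim and yields $\rho_g \propto e^{-\beta \mathcal{G}_X}$ with $\mathcal{G}_X = iU_*(X)$ for $X \in \mathfrak{g}^{\oplus n}$. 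The remaining task is to identify $U_*(X)$ with the bilinear operator \eqref{u}.

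To make this identification, I would set $g'_i = \exp(tX_i)$ in \eqref{30}, differentiate at $t=0$, and read off a commutator of the form $[\mathcal{G}_X, \Psi(\psi)] = -i\,\Psi(\mathcal{L}_X \psi)$, where $\mathcal{L}_X$ acts on the single-polyhedron wavefunctions $\psi$ through the pullback $L^*_{\{g'_i\}}$ introduced before \eqref{29}. Because this action is linear on the creation and annihilation operators, $\mathcal{G}_X$ must be a one-body second-quantized observable on $\mathcal{H}_{\text{Fock}}$, and the standard prescription writes it as
\begin{equation*}
\mathcal{G}_X \;=\; i\int_{G^{\times n}}\prod_{i=1}^n d\mu(g_i)\, \overline{\phi(\{g_i\})}\, \mathcal{L}_X \phi(\{g_i\}).
\end{equation*}
Equating this with $iU_*(X,g)$ recovers the integrand of \eqref{u}, modulo the choice of invariant extension of $X$ to a vector field on $G^{\times n}$.

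The main obstacle is precisely that last choice: one must verify that the vector field implementing the collective left translation is the \emph{right}-invariant extension $R_{g^*}X$, not the left-invariant one. This is a small but easily-mishandled Lie-group computation: since $L^*_{\{g'_i\}}\psi(g) = \psi(g'^{-1}g)$, the tangent vector of $t \mapsto e^{-tX}g$ at $t = 0$ equals $(dR_g)_e(-X)$, which is right-invariant by construction. Moreover, this choice is forced by compatibility with the gauge invariance \eqref{2.11}: the residual diagonal action $\phi(\{g_i\}) \mapsto \phi(\{hg_i\})$ is itself a collective left translation, and commutes with $\mathcal{G}_X$ only when the generator is built from right-invariant vector fields; otherwise $\mathcal{G}_X$ would fail to descend to the physical, gauge-invariant Fock sector on which the KMS analysis lives. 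Once this identification is pinned down, substituting $\mathcal{G}_X = iU_*(X,g)$ into $\rho_g = e^{-\beta\mathcal{G}_X}/Z$ delivers the stated Gibbs form $\rho_g = e^{-i\beta U_*(X,g)}/Z$.
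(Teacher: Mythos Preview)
Your proposal is correct and follows essentially the same route as the paper: the theorem is stated as an immediate specialization of Lemma~\ref{l2.10} to $\mathcal{H}'=\mathcal{H}_{\text{Fock}}$ and the connected group $G^{\times n}$, together with the remark that the generators of the collective left translations must be right-invariant so as not to spoil the global gauge invariance \eqref{2.11}. Your write-up is in fact more explicit than the paper's, which does not spell out the derivation of the second-quantized one-body form of $U_*(X,g)$ or the tangent-vector computation showing that $t\mapsto e^{-tX}g$ yields a right-invariant field; the paper simply asserts these points in the sentence preceding the theorem.
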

\section{Thermo field dynamics  via Tomita-Takesaki theory}\label{s3}
 TFD is a reformulation of the real time formalism, i.e. the closed time path formalism, of quantum field theories at finite temperature \cite{Das97}. Basically, TFD intends to rewrite the partition function of a  thermal  quantum system  in the form of a vacuum-vacuum transition amplitude as in a zero-temperature field theory. For a state $\rho$ in equilibrium at inverse temperature $\beta_T=1/T$, the thermal partition function is $Z=\text{Tr}(e^{-\beta_T H})$, the trace of the Gibbs state with some Hamiltonian $H$. Then the thermal  average of an operator $A$ can be written as 
\begin{equation}
\braket{A}_{\beta_T}=\frac{1}{Z}\text{Tr}(Ae^{-\beta_TH})\equiv\braket{0,\beta_T|A|0,\beta_T}
\end{equation}
where the thermal vacuum $\ket{0,\beta_T}$, when written in the energy eigenbasis $\{\ket{n}\}$, is
\begin{equation}\label{34}
\ket{0,\beta_T}=\frac{1}{\sqrt{Z}}\sum_ne^{-\frac{1}{2}\beta_T E_n}\ket{n}\otimes\ket{\tilde{n}}.
\end{equation}
The tensor product basis states $\ket{n}\otimes\ket{\tilde{n}}$ are in the doubled Hilbert space $\mathcal{H}\otimes\mathcal{\tilde{H}}$ where $\mathcal{H}$ is the quantum system of interest and $\mathcal{\tilde{H}}$ is a copy of $\mathcal{H}$ so as to produce the same $\delta$-functions, i.e. $\delta_{mn}=\braket{\tilde{m}|\tilde{n}}=\braket{m|n}$. In the formalism of second quantization, the coefficients $e^{-\frac{1}{2}\beta_T E_n}$ comes from the Bogoliubov transformation between the two sets of  annihilation and  creation  operators:  $(a,a^\dag)$ for  $\mathcal{H}$ and $(\tilde{a},\tilde{a}^\dag)$ for  $\mathcal{\tilde{H}}$.
The time evolutions in  $\mathcal{H}\otimes\mathcal{\tilde{H}}$ are then generated by the free total Hamiltonian $H-\tilde{H}$, and this total Hamiltonian annihilates the thermal vacuum, $(H-\tilde{H})\ket{0,\beta}=0$. Interaction terms $\lambda(L_I-\tilde{L}_I)$ can be added to the free total Lagrangian $L_0-\tilde{L}_0$, so that the perturbative Feynman diagrams can be calculated for the doubled system.

The $\tilde{\mathcal{H}}$ in TFD is usually considered as a ficticious tool to express the thermal mixed state in $\mathcal{H}$ as a pure entangled state in $\mathcal{H}\otimes\mathcal{\tilde{H}}$. However, since the thermal vacuum is entangled, $\tilde{\mathcal{H}}$  surely contains the physical information of $\mathcal{H}$. 
In the following we shall investigate  the quantum geometric meaning  of $\tilde{\mathcal{H}}$ in the TFD extension of GFT.
\subsection{Algebraic thermofield dynamics}\label{s3.1}
The  thermal vacuum \eqref{34} of TFD is based on a well-defined Hamiltonian $H$ of a thermal system generating the time evolutions. For GFT and the related theories of quantum gravity, the Hamiltonian might not exist, but the operator-algebraic formulation is still valid.

Let $\mathcal{A}$ be a $C^*$-algebra of physical observables with faithful algebraic  states $\omega$, then we have the GNS representation $(\mathcal{H}_\omega,\pi_\omega,\Omega_\omega)$ of $\mathcal{A}$ constructed from $\omega$. Let $\mathcal{M}=\pi_\omega^{\prime\prime}(\mathcal{A})$ be the von Neumann algebra generated by $\pi_\omega(\mathcal{A})$, then for $M\in\mathcal{M}$ it allows an antilinear operator $S$ on $\mathcal{H}_\omega$ such that $SM\Omega_\omega=M^\dag\Omega_\omega$, or equivalently $S\ket{A}=\ket{A^\dag}$ for $\ket{A}\in\mathcal{H}_\omega$.
The  polar decomposition $S=J\Delta^{1/2}$ defines  the modular conjugation operator $J=J^\dag$ and  the modular operator $\Delta=S^\dag S$. Then the Tomita-Takesaki modular theory of von Neumann algebra \cite{ZW16}  tells us that
\begin{equation}
J\mathcal{M}J=\mathcal{M}',\quad \Delta^{it}\mathcal{M}\Delta^{-it}=\mathcal{M},\quad t\in\mathbb{R}
\end{equation}
where the former is the commutation relation and the latter dictates the ``time evolution", or a a flow $\Delta^{it}(\cdot)\Delta^{-it}\equiv\sigma_{t}$ of the algebra without referring to an explicit Hamiltonian.
 The Tomita-Takesaki theoretic formulation of TFD is as follows.
\begin{theorem}[Ojima \cite{Oji81}]
For a thermal quantum system described by a von Neumann algebra $\mathcal{M}$ constructed from the KMS states $\omega_{\text{KMS}}$ with respect to the flow generated by the modular operator $\sigma_{t}$, the TFD of $\mathcal{M}$ consists of the ``tilde" system described by $\tilde{\mathcal{M}}=J\mathcal{M}J=\mathcal{M}'$ with $J$ being the modular conjugate operator and  the thermal vacuum $\Omega_{\omega_{\text{KMS}}}$ which is the GNS vacuum of $\omega_{\text{KMS}}$.
\end{theorem}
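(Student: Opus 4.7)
The plan is to show that the entire TFD structure can be read off directly from the GNS data of the KMS state $\omega_{\text{KMS}}$ via Tomita-Takesaki theory, without ever invoking a Hamiltonian. First, I would recall that every faithful normal state $\omega$ on a von Neumann algebra $\mathcal{M}$ induces the modular objects $J$ and $\Delta$ through the polar decomposition of the Tomita operator $S$ defined by $SM\Omega_\omega=M^{\dag}\Omega_\omega$, and that by Tomita's theorem $\omega$ is automatically KMS at inverse parameter $1$ with respect to the modular automorphism group $\sigma_{t}=\Delta^{it}(\cdot)\Delta^{-it}$. Hence the hypothesis that $\omega_{\text{KMS}}$ is KMS with respect to the flow generated by $\sigma_t$ is exactly the Tomita-Takesaki setting, and the GNS vector $\Omega_{\omega_{\text{KMS}}}$ is cyclic and separating for both $\mathcal{M}$ and its commutant.

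Next I would identify the algebraic counterpart of the tilde system. By Tomita-Takesaki theory, $J\mathcal{M}J=\mathcal{M}'$, so for every $M\in\mathcal{M}$ the element $\tilde{M}:=JMJ\in\mathcal{M}'$ plays the role of the tilde-conjugate operator of conventional TFD. In particular the Bogoliubov-like relations between $(a,a^{\dag})$ and $(\tilde{a},\tilde{a}^{\dag})$ in Sec.\ 3 become the abstract statement that $J$ intertwines $\mathcal{M}$ and $\mathcal{M}'$ antilinearly, and the commutation of tilde operators with original ones becomes $[\mathcal{M},\mathcal{M}']=0$, which holds tautologically.

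Then I would verify that $\Omega_{\omega_{\text{KMS}}}$ functions as the thermal vacuum. By construction of the GNS triple, $\omega_{\text{KMS}}(A)=\braket{\Omega_{\omega_{\text{KMS}}}|\pi_{\omega_{\text{KMS}}}(A)|\Omega_{\omega_{\text{KMS}}}}$, which reproduces the thermal expectation value $\braket{A}_{\beta_T}$ appearing in (3.32). The KMS condition, rewritten via $S=J\Delta^{1/2}$, yields the relation $\pi_{\omega_{\text{KMS}}}(A)\Omega_{\omega_{\text{KMS}}}=\Delta^{1/2}J\pi_{\omega_{\text{KMS}}}(A^{\dag})\Omega_{\omega_{\text{KMS}}}$, which on a type I factor reduces to the explicit entangled sum in (3.34) after expanding $\Delta^{1/2}$ in an eigenbasis of the modular Hamiltonian.

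The main obstacle lies in this last step: for a generic KMS state on $\pi_F''(\mathcal{W})=\mathcal{B}(\mathcal{H}_{\text{Fock}})$ in an infinite-volume or thermodynamic limit, the algebra associated with the folium of the KMS state is typically a type III factor and $\mathcal{H}_{\omega_{\text{KMS}}}$ does not factorize as $\mathcal{H}\otimes\tilde{\mathcal{H}}$, so the conventional tensor-product expression (3.34) is not literally available. I would address this by invoking the split property, exactly as anticipated in the introduction, to insert an intermediate type I factor between $\mathcal{M}$ and $\mathcal{M}'$; this produces a genuine factorized Hilbert space on which $\Omega_{\omega_{\text{KMS}}}$ takes the familiar doubled form, thereby recovering the conventional TFD intuition while keeping the algebraic statement $\tilde{\mathcal{M}}=J\mathcal{M}J=\mathcal{M}'$ as the primary definition.
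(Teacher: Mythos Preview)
Your proposal is essentially correct and follows the same Tomita--Takesaki route as the paper: identify $\tilde{\mathcal{M}}=J\mathcal{M}J=\mathcal{M}'$, recognize the GNS vector $\Omega_{\omega_{\text{KMS}}}$ as the thermal vacuum, and read off the thermal/KMS condition from $S=J\Delta^{1/2}$.

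The differences are mainly ones of emphasis. The paper's proof explicitly introduces the modular generator $\bar{H}$ via $\Delta=e^{-\beta\bar{H}}$ and checks the characteristic TFD axioms one by one: that $\bar{H}\Omega_{\omega_{\text{KMS}}}=0$ (the analogue of $(H-\tilde{H})\ket{0,\beta}=0$), that $J\bar{H}J=-\bar{H}$ (so $\bar{H}$ has the structure $H-JHJ$ when a physical Hamiltonian exists), and that $M^\dag\Omega_{\omega_{\text{KMS}}}=Je^{-\beta\bar{H}/2}M\Omega_{\omega_{\text{KMS}}}$ reproduces the TFD thermal state condition. You touch on the last of these but do not isolate the first two, which are precisely what justify calling the construction ``TFD'' rather than merely ``modular theory''. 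Conversely, your final paragraph on type III factors and the split property is not part of the paper's proof of this theorem at all; that material is deferred to the subsequent Requardt theorem, so you are folding two results into one. Trimming that paragraph and adding the explicit checks $\bar{H}\Omega=0$ and $J\bar{H}J=-\bar{H}$ would bring your argument into alignment with the paper's.
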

\begin{proof}
We first note that the flow $\sigma_{t}$ can be considered as a flow with generator $\bar{H}$ such that $\Delta^{it}(\cdot)\Delta^{-it}=e^{i\bar{H}t}(\cdot)e^{-i\bar{H}t}$ for $\Delta=e^{-\beta\bar{H}}$ and $\beta=-1$. Given the GNS vacuum $\Omega_{\omega_{\text{KMS}}}$, one has $\Delta^{it}\Omega_{\omega_{\text{KMS}}}=0$, which implies $\bar{H}\Omega_{\omega_{\text{KMS}}}=0$. Since $J\Delta J=\Delta^{-1}$, one has furthermore $J\bar{H}J=-\bar{H}$. If the system described by $\mathcal{M}$ has a well-defined Hamiltonian $H$, then $\bar{H}$ admits the expression $\bar{H}=H-JHJ$, which indicates that the ``tilde" space should be obtained by the modular conjugation.
To proceed without a Hamiltonian, we suppose $\bar{H}$ is merely a generator of the flow $\sigma_t,t\in\mathbb{R}$. Then we have
\[M^\dag\Omega_{\omega_{\text{KMS}}}=SM\Omega_{\omega_{\text{KMS}}}=J\Delta^{-1/2}M\Omega_{\omega_{\text{KMS}}}=Je^{-\beta\bar{H}/2}M\Omega_{\omega_{\text{KMS}}},\]
which corresponds to the thermal condition underlying the KMS condition in TFD \cite{Das97} with only one additional $J$. From the Tomita-Takesaki theory, we also have the properties that $J\Omega_{\omega_{\text{KMS}}}=\Omega_{\omega_{\text{KMS}}}$ and $JMJ\in\mathcal{M}'$. So by identifying $JMJ=\tilde{M}\in\tilde{\mathcal{M}}$, we obtain $\braket{A}_\beta=(\Omega_{\omega_{\text{KMS}}},A\Omega_{\omega_{\text{KMS}}})$ satisfying the KMS condition.
\end{proof}
Notice that in general $\mathcal{M}'$ is not the same as $\mathcal{M}$ (because we want $\mathcal{M}$ to be a factor),  but on the level of elements $M\in{\mathcal{M}}$ and $JMJ\in{\mathcal{M}'}$, so the corresponding Fock-space structures are the same. The reason is that for the GNS representation $(\mathcal{H}_\omega,\pi_\omega,\Omega_\omega)$ of $\mathcal{A}$ constructed from a state $\omega$ and the von Neumann algebra $\mathcal{M}=\pi_\omega^{\prime\prime}(\mathcal{A})$, since the vacuum $\Omega_\omega$ is cyclic and separating, we have $\mathcal{H}_\omega=\overline{\pi_\omega\Omega_\omega}=\overline{\mathcal{M}\Omega_\omega}=\overline{\mathcal{M}'\Omega_\omega}$  where the overline denotes the norm closure. In this sense, the above algebraic TFD is more general than the Hilbert-space version.

To retain the description via factorized Hilbert spaces in the algebraic formulation, we can assume the split property as for the net of algebras in relativistic quantum field theory \cite{DL84,Req13}. Consider two von Neumann algebras $\mathcal{M}_1,\mathcal{M}_2$ of  observables with $\mathcal{M}_1\subset\mathcal{M}_2$ (or $\mathcal{M}_1\cap\mathcal{M}'_2=\emptyset$), then by the split property there is an intermediate type I factor $\mathcal{M}$ such that $\mathcal{M}_1\subset\mathcal{M}\subset\mathcal{M}_2$. For  $\mathcal{M}_1\lor\mathcal{M}_2'(\cong\mathcal{M}_1\otimes\mathcal{M}_2')$, there is a cyclic and separating vector $\eta\in\mathcal{H}$ \footnote{ $\mathcal{H}$ is the separating Hilbert space on which  the net of von Neumann algebras  $\mathcal{M}_i$ act.} and a unitary operator $W:\mathcal{H}\rightarrow\mathcal{H}\otimes\mathcal{H}$ such that $W^*AB'W=A\otimes B'$ and $W\eta=\Omega\otimes\Omega$. Then  $\mathcal{M}=W^*(\mathcal{B}(\mathcal{H}\otimes1))W,\mathcal{M}'=W^*(\mathcal{B}(1\otimes\mathcal{H}))W$, and therefore $\overline{\mathcal{M}\lor\mathcal{M}'}=\mathcal{B}(\mathcal{H})$. The localized finite-dimensional Hilbert space is $\mathcal{H}_\mathcal{M}=\mathcal{M}\eta$, since $\mathcal{H}_\mathcal{M}=W^*\mathcal{H}\otimes\Omega=W^*(\mathcal{B}(\mathcal{H})\otimes1)WW^*(\Omega\otimes\Omega)$.
TFD can be formulated in terms of the localized finite-dimensional $\mathcal{H}_\mathcal{M},\mathcal{H}_\mathcal{M}'$:
\begin{theorem}[Requardt \cite{Req13}]\label{req}
Let $\mathcal{M}$  be the intermediate factors as above. Then the TFD of $\mathcal{M}$ consists of the ``tilde'' system $\mathcal{M}'$ and the following modular data on the factorized localized Hilbert spaces $\mathcal{H}_\mathcal{M}\otimes\mathcal{H}_{\mathcal{M}'}$ satisfying KMS condition: the modular operator 
$
\hat{\Delta}_\Omega=\hat{\rho}_\Omega\otimes\hat{\tilde{\rho}}^{-1}_\Omega
$ on $\mathcal{M}$, the modular conjugate $J$ acting as $J(\psi_i\otimes\psi_j')=\psi_j\otimes\psi_i'$, where $\hat{\rho}_\Omega$ are the density operators on $\mathcal{H}_\mathcal{M}$ and the $\psi_i$ are the eigenstates of $\hat{\rho}_\Omega$, and likewise for $\mathcal{M}'$.
\end{theorem}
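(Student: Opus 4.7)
The strategy is to combine the split-property unitary $W$ with a Schmidt decomposition of the cyclic and separating vector, then read off $\Delta$ and $J$ by direct calculation. First I would use the identifications from the paragraph preceding the theorem: $W^*AW=A\otimes 1$ for $A\in\mathcal{M}$, $W^*B'W=1\otimes B'$ for $B'\in\mathcal{M}'$, and $\Omega:=W\eta\in\mathcal{H}_\mathcal{M}\otimes\mathcal{H}_{\mathcal{M}'}$ is cyclic and separating for $\mathcal{M}$ represented as $\mathcal{B}(\mathcal{H}_\mathcal{M})\otimes 1$. Writing the Schmidt decomposition $\Omega=\sum_i\sqrt{p_i}\,\psi_i\otimes\psi_i'$ with $p_i>0$ (strict positivity is exactly the separating condition) and orthonormal $\{\psi_i\}$, $\{\psi_i'\}$, the partial traces yield
\begin{equation*}
\hat{\rho}_\Omega=\sum_i p_i\,\ket{\psi_i}\bra{\psi_i},\qquad \hat{\tilde{\rho}}_\Omega=\sum_i p_i\,\ket{\psi_i'}\bra{\psi_i'},
\end{equation*}
so the $\psi_i$ and $\psi_i'$ are their eigenstates, matching the statement.

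Next I would compute the Tomita operator $S$ on the dense domain $\mathcal{M}\Omega$. For $A\otimes 1\in\mathcal{M}$ we have $(A\otimes 1)\Omega=\sum_i\sqrt{p_i}\,(A\psi_i)\otimes\psi_i'$, and the defining relation $S(A\otimes 1)\Omega=(A^\dagger\otimes 1)\Omega$ extends uniquely to an antilinear closable operator acting on the product basis as $S(\psi_i\otimes\psi_j')=\sqrt{p_j/p_i}\,\psi_j\otimes\psi_i'$. Separating this into a positive factor and an antiunitary factor gives the polar decomposition $S=J\Delta^{1/2}$ with
\begin{equation*}
\Delta(\psi_i\otimes\psi_j')=\frac{p_i}{p_j}\,\psi_i\otimes\psi_j',\qquad J(\psi_i\otimes\psi_j')=\psi_j\otimes\psi_i',
\end{equation*}
where $J$ is extended antilinearly in the amplitudes relative to the Schmidt basis. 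The spectral form of $\Delta$ is exactly $\hat{\rho}_\Omega\otimes\hat{\tilde{\rho}}_\Omega^{-1}$, which is the modular operator announced in the theorem.

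Finally I would verify the two remaining assertions. The swap form of $J$ shows at once that $J(A\otimes 1)J$ acts nontrivially only on the second tensor factor, so $J(A\otimes 1)J\in 1\otimes\mathcal{B}(\mathcal{H}_{\mathcal{M}'})=\mathcal{M}'$ and hence $\tilde{\mathcal{M}}=J\mathcal{M}J=\mathcal{M}'$, in agreement with Ojima's theorem above. The KMS condition for $\omega(A)=\braket{\Omega|A|\Omega}$ relative to the modular flow $\sigma_t=\Delta^{it}(\cdot)\Delta^{-it}$ then follows from the standard Tomita--Takesaki theorem; no additional work is required because the factorized form $\Delta=\hat{\rho}_\Omega\otimes\hat{\tilde{\rho}}_\Omega^{-1}$ makes the analytic continuation in $t$ transparent and reproduces the thermal condition $M^\dag\Omega=Je^{-\bar{H}/2}M\Omega$ of the previous theorem. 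The main subtlety I expect is the antilinear bookkeeping for $J$: the formula $\psi_i\otimes\psi_j'\mapsto\psi_j\otimes\psi_i'$ must be read as antilinear in the amplitudes (not merely a unitary swap), and one should confirm this antilinear extension is independent of the choice of Schmidt basis (equivalently, that it intertwines the natural conjugation built from $S$). Once this convention is pinned down, everything above reduces to elementary matrix calculations on the joint eigenbasis of $\hat{\rho}_\Omega\otimes\hat{\tilde{\rho}}_\Omega$.
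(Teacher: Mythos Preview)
Your approach is correct and, in fact, more transparent than the paper's, though it proceeds along a genuinely different route. The paper \emph{posits} the candidate modular data $\Delta=\rho\otimes\tilde{\rho}^{-1}$ and the swap $J$ on $\mathcal{H}\otimes\mathcal{H}$, asserts that $\Delta^{is}(\cdot)\Delta^{-is}$ is the modular flow, transports everything back through the split unitary $W$, localizes via $\hat{\rho}_\Omega=W^*(\rho\otimes P_\Omega)W$, and then checks the KMS condition directly by trace cyclicity. You instead \emph{derive} the modular data: Schmidt-decompose the cyclic separating vector, compute the Tomita operator $S$ on matrix units, and read off $\Delta$ and $J$ from the polar decomposition, appealing to the abstract Tomita--Takesaki theorem for KMS. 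Your route is the standard purification computation and makes the eigenstructure of $\hat{\rho}_\Omega$ appear automatically; the paper's route keeps the split unitary $W$ and the localization projector $P_\Omega$ in the foreground, which is closer in spirit to Requardt's original emphasis on how the intermediate type~I factor mediates between the global algebra and the localized tensor product.

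One small slip to fix: with $\Omega=\sum_i\sqrt{p_i}\,\psi_i\otimes\psi_i'$ and $A=\ket{\psi_a}\bra{\psi_b}$ one gets $(A\otimes1)\Omega=\sqrt{p_b}\,\psi_a\otimes\psi_b'$ and $(A^\dagger\otimes1)\Omega=\sqrt{p_a}\,\psi_b\otimes\psi_a'$, so $S(\psi_i\otimes\psi_j')=\sqrt{p_i/p_j}\,\psi_j\otimes\psi_i'$, not $\sqrt{p_j/p_i}$ as you wrote. Your stated $\Delta(\psi_i\otimes\psi_j')=(p_i/p_j)\,\psi_i\otimes\psi_j'$ is nonetheless the correct one, consistent with $\hat{\rho}_\Omega\otimes\hat{\tilde{\rho}}_\Omega^{-1}$; only the intermediate coefficient in $S$ needs to be inverted for the polar decomposition to be internally consistent.
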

\begin{proof}
Consider density operator $\rho,\tilde{\rho}$ on $\mathcal{H}$, then $\rho\otimes\tilde{\rho}^{-1}\equiv\Delta$ defines the modular operator on $\mathcal{H}\otimes\mathcal{H}$. The modular conjugate $J$ acts as $J(e_i\otimes f_j)=e_j\otimes f_i$ for the orthonormal basis $\{e_i\otimes f_j\}$ of  $\mathcal{H}\otimes\mathcal{H}$. It is easy to check that $\Delta^{is}(\cdot)\Delta^{-is}$ generates a modular flow on $\mathcal{B}(\mathcal{H})\otimes1$, and likewise  $\Delta^{-is}(\cdot)\Delta^{is}$ generates a modular flow on $1\otimes\mathcal{B}(\mathcal{H})$; the vector $\hat{\Omega}=W\Omega$  satisfies the KMS condition under both flows. Next, using $W$ we can transform the KMS conditions to $\mathcal{M}$ and $\mathcal{M}'$ with respect to $\Omega$ with 
\[\hat{\Delta}=\hat{\rho}\otimes\hat{\tilde{\rho}}^{-1}=W^*(\rho\otimes1)W\cdot W^*(1\otimes\tilde{\rho}^{-1})W\]
on $\mathcal{M}$. This can be furthermore localized to $\mathcal{H}_\mathcal{M}$ by noting the localization of invertibel density operators: 
$\hat{\rho}_\Omega=W^*(\rho\otimes P_\Omega)W$ where $P_\Omega$ is a projector onto $\Omega$. Then for $A,B\in \mathcal{M}$, we have the KMS condition 
\[\text{tr}[\hat{\rho}_\Omega(\hat{\rho}_\Omega^{is}A\hat{\rho}_\Omega^{-is})B]=\text{tr}[B(\hat{\rho}_\Omega^{i(s-i)}A\hat{\rho}_\Omega^{-is})]=\text{tr}[\hat{\rho}_\Omega B(\hat{\rho}_\Omega^{i(s-i)}A\hat{\rho}_\Omega^{-i(s-i)})B].
\]
The similar holds on $\mathcal{M}'$.
\end{proof}

\subsection{Thermofield double states in group field theories }
Let us first recall the thermal vacuum of a free bosonic oscillator with Hamiltonian $H_b=\epsilon a^\dag a$ \cite{Das97},
\begin{equation}\label{37}
\ket{0,\beta}_b=\sqrt{1-e^{-\beta\epsilon}}\sum_{n=0}^\infty e^{-n\beta\epsilon/2}\ket{n}\ket{\tilde{n}}=e^{\theta(\beta)(a^\dag\tilde{a}^\dag-\tilde{a}a)}\ket{0}\ket{\tilde{0}}
\end{equation}
where $(\tilde{a},\tilde{a}^\dag)$ are the annihilation and creation operators on $\tilde{\mathcal{H}}$, and the parameter $\theta(\beta)$ is defined by the Bogoliubov transformation coefficients
\begin{equation}\label{bogo}
\cosh\theta=(1-e^{-\beta\epsilon})^{-1/2},\quad \sinh\theta=e^{-\beta\epsilon/2}(1-e^{-\beta\epsilon})^{-1/2}.
\end{equation}
The thermal vacuum \eqref{37} is a squeezed state with the temperature-dependent squeezing parameter $\theta(\beta)$. Now to formulate the TFD of GFT, we first need to perform the GNS construction from the Weyl algebra $\mathcal{W}$ and a new vacuum $\ket{\Omega_S}$ different from the Fock vacuum $\ket{\Omega_F}=\ket{\Omega}$. From \eqref{37}, we expect the new vacuum $\ket{\Omega_S}$ to define a $\beta$-dependent squeezed state representation similar to the coherent state representation  \cite{KOT18}.
The squeezed states on a Weyl $C^*$-algebra have been studied, for example, in \cite{HR96,HR97}. We similarly define the  GFT squeezed states on $\mathcal{W}$:
\begin{definition}
Let $\mathcal{W}$ be the Weyl $C^*$-algebra for GFT with Weyl elements $W(\psi)$. A squeezed state on $\mathcal{W}$ is
\begin{equation}\label{38}
\omega_S(W(\psi))=\omega_{F}(W(\psi))e^{-\frac{1}{2}\text{Re}\{c(\psi,\bar{\psi})\}},\quad c\in\mathbb{C}
\end{equation}
where $\omega_{F}(W(\psi))$ is the Fock state \eqref{27}.
\end{definition}
The extra term added to the Fock state can be understood as   coming from the variances (or fluctuations) of the  self-adjoint field operators $(\Phi,\Phi^\dag)$ of $(\Psi,\Psi^\dag)$ in the expectation values of a squeezed state $\ket{\Omega_S}$. By requiring the special relations
\begin{align}
\braket{\Omega_S|\Psi(\psi)\Psi(\psi')|\Omega_S}=&c(\psi,\bar{\psi}'),\quad\braket{\Omega_S|\Psi(\psi)|\Omega_S}=0,\nonumber\\
 \braket{\Omega_S|\Psi^\dag(\psi)\Psi^\dag(\psi')|\Omega_S}=&\bar{c}(\bar{\psi},{\psi}'),\quad \braket{\Omega_S|\Psi^\dag(\psi)|\Omega_S}=0,\label{3.7}
\end{align}
where $c\in\mathbb{C}$. We see that the  the terms in \eqref{3.7} contribute to the covariance as $\text{Re}\{c(\psi,\bar{\psi})\}$.
We can   write the squeezed vacuum $\ket{\Omega_S}$ in a Fock space in the following way.  Consider the Fock space for GFT with the Fock  vacuum state $\ket{\Omega}$, then
 the (single mode) squeezed state can be written as the usual form of squeezed vacuum on Fock space 
\begin{equation}\label{40}
\ket{\Omega_S}=e^{\frac{b}{2}\Psi^\dag(\psi)\Psi^\dag(\psi')+\frac{\bar{b}}{2}\Psi(\psi)\Psi(\psi')}\ket{\Omega}, 
\end{equation}
where $b\in\mathbb{C}$ is the squeezing parameter. When Re$\{c(\psi,\bar{\psi})\}=0$, the state \eqref{38} reduces to the Fock state without squeezing.

We are interested in the von Neumann algebra $\mathcal{M}\otimes\mathcal{M}'\equiv\mathsf{M}$ of the TFD type. 
So let $\mathsf{W}=\mathcal{W}\otimes\tilde{\mathcal{W}}$ be the Weyl $C^*$-algebra corresponding to $\mathsf{M}$ and $\omega_{\mathsf{W}}:\mathsf{W}\rightarrow\mathbb{C}$ be the algebraic states on $\mathsf{W}$. 
By the same reasoning of Sec. \ref{s2.1}, we can construct from $\omega_{\mathsf{W}}$ the GNS triple $(\mathcal{H}_{\mathsf{W}},\pi_{\mathsf{W}},\ket{\Omega_{\mathsf{W}}})$. If this GNS representation restricted to $\mathcal{W}$ is pure, we have $\tilde{\mathcal{M}}=J\mathcal{M}J=\mathcal{M}'=\eta{\bf1}$ with $\eta$ a constant. Let us denote $\mathsf{M}_0=\mathcal{M}\otimes{\bf1}$.  When the total GNS representation on $\mathsf{W}$ is Fock, we still have the pure states $\mathcal{\omega}_{\mathsf{W},F}$. 
 If furthermore a $\mathcal{\omega}_{\mathsf{W},F}$ is separable and satisfy $\mathcal{\omega}_{\mathsf{W},F}|_\mathcal{W}=\omega_F$, then
$\mathcal{\omega}_{\mathsf{W},F}\in\overline{\text{conv}}(\omega_F\otimes {\bf1})$, the norm closure of convex combination of product Fock states.
On the other hand, if $\omega_{\mathsf{W}}=\omega_{\mathsf{W},\text{KMS}}$ is the KMS states on $\mathsf{W}$  with respect to the left translations on the group-element arguments  (which is not affected by $J$) in $\psi$, then $\tilde{\mathcal{M}}\neq\eta{\bf1}$. In this case, $\mathsf{M}_0$ becomes a subalgebra of $\mathsf{M}$.
We can also define the (two-mode) squeezed states $\omega_{\mathsf{W},S}$ on $\mathsf{W}$ with the replacement $\Psi(\psi')\rightarrow\tilde{\Psi}(\psi')$. In this case, the relations in \eqref{3.7} can be violated, but the definition of squeezed vacuum state still holds. 

For $\mathsf{M}$ to describe an algebraic TFD, it is required that $\ket{\Omega_\mathsf{W}}=\ket{\Omega_S}=\ket{\Omega_{\omega_{\text{KMS}}}}$.
The following result shows that $\omega_{\mathsf{W},S}$ is qualified to be a TFD state, i.e. an entangled squeezed states.
\begin{proposition}
Let  $\mathsf{W}=\mathcal{W}\otimes{\mathcal{W}}'$ be a Weyl $C^*$-algebra. Then the squeezed states $\omega_{\mathsf{W},S}$ on $\mathsf{W}$ is an entangled state.
\end{proposition}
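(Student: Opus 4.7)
The plan is to establish entanglement by showing that the squeezed vacuum $\ket{\Omega_S}$ — which defines $\omega_{\mathsf{W},S}$ as a vector state on $\mathsf{M}=\mathcal{M}\otimes\mathcal{M}'$ and therefore also on the $C^*$-subalgebra $\mathsf{W}$ — has non-trivial Schmidt rank across the $\mathcal{W}/\tilde{\mathcal{W}}$ bipartition. Because $\omega_{\mathsf{W},S}$ is a pure state of $\mathsf{W}$, it is an extreme point of its state space, so any separable decomposition as a convex combination of product states would collapse to a single pure product $\omega_1\otimes\tilde\omega_2$ with both factors pure. Ruling out this factorised form is therefore enough, and the cleanest route is to produce a Schmidt decomposition of $\ket{\Omega_S}$ with at least two nonzero coefficients, together with the observation that the induced marginal on $\mathcal{W}$ is mixed.

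To execute this, I would first make the two-mode vacuum explicit by applying the replacement $\Psi(\psi')\to\tilde\Psi(\psi')$ in \eqref{40}, obtaining
\begin{equation*}
\ket{\Omega_S} = \exp\!\Bigl[\tfrac{b}{2}\,\Psi^\dag(\psi)\tilde\Psi^\dag(\psi')-\tfrac{\bar b}{2}\,\Psi(\psi)\tilde\Psi(\psi')\Bigr]\ket{\Omega}\otimes\ket{\tilde\Omega}.
\end{equation*}
Taking $\psi,\psi'$ of unit norm in $\mathcal{H}^{(1)}$, the operators $K_+=\Psi^\dag(\psi)\tilde\Psi^\dag(\psi')$, $K_-=\Psi(\psi)\tilde\Psi(\psi')$, and $K_0=\tfrac{1}{2}(\Psi^\dag(\psi)\Psi(\psi)+\tilde\Psi^\dag(\psi')\tilde\Psi(\psi')+1)$ close an $\mathfrak{su}(1,1)$ algebra on the dense domain of finite-particle vectors, and the standard disentangling identity delivers the Schmidt-diagonal form
\begin{equation*}
\ket{\Omega_S} = \frac{1}{\cosh|b|}\sum_{n=0}^{\infty}\bigl(e^{i\arg b}\tanh|b|\bigr)^n\,\ket{n_\psi}\otimes\ket{\tilde n_{\psi'}}.
\end{equation*}
For every $b\neq 0$ at least two of the coefficients $\lambda_n = (\tanh|b|)^n/\cosh|b|$ are nonzero, so $\ket{\Omega_S}$ is not a product vector. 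Translating this back algebraically, the marginal $\omega_{\mathsf{W},S}|_{\mathcal{W}}$ is a quasi-free state with mean occupation $\sinh^2|b|$ in the $\psi$-mode — manifestly mixed and distinct from the Fock state \eqref{27}. Combined with the extremality argument of the opening paragraph, this precludes separability.

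The main obstacle will be making the BCH rearrangement rigorous in the infinite-dimensional GFT setting, since $K_\pm$ are unbounded and the single-particle Hilbert space is not one-dimensional. I would address this by restricting all manipulations to the $\mathrm{SU}(1,1)$-invariant finite-particle subspace spanned by $\{\ket{n_\psi}\otimes\ket{\tilde n_{\psi'}}\}_{n\geq0}$, on which the series converges in the graph norm because $\tanh|b|<1$. As a fully-algebraic backup that avoids the BCH step entirely, one can instead compute only the cross two-point function $\omega_{\mathsf{W},S}(\Psi(\psi)\otimes\tilde\Psi(\psi'))$ from the two-mode analogue of \eqref{3.7}: its non-vanishing, together with the vanishing of the one-point functions $\omega_{\mathsf{W},S}(\Psi(\psi)\otimes\mathbf{1})$ and $\omega_{\mathsf{W},S}(\mathbf{1}\otimes\tilde\Psi(\psi'))$, already violates the clustering identity $\omega(AB)=\omega(A)\omega(B)$ that every product state must satisfy on $\mathcal{W}\otimes\tilde{\mathcal{W}}$, yielding entanglement directly at the level of Weyl generators.
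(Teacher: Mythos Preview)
Your argument is correct and, at its core, shares the idea of the paper's second proof: both establish entanglement by showing that the marginal of $\omega_{\mathsf{W},S}$ on the untilded factor is a genuinely mixed state (the paper phrases this as $\rho_S=\sum_n\lambda_n\rho_F^{(n)}$ with nonzero von Neumann entropy, while you obtain the same conclusion from the explicit Schmidt coefficients $\lambda_n=(\tanh|b|)^n/\cosh|b|$). Where you differ is in execution and rigor: the paper simply asserts the convex decomposition and its entropy, whereas you derive the Schmidt form via the $\mathfrak{su}(1,1)$ disentangling identity, justify purity of the global state to reduce separability to a single product, and confront the unboundedness issue by restricting to the invariant finite-particle sector. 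Your backup route---the violation of factorisation $\omega(A\otimes B)=\omega(A)\omega(B)$ by a nonzero cross two-point function with vanishing one-point functions---is not in the paper at all and is arguably the cleanest proof, since it bypasses any spectral or BCH manipulation and works directly at the level of Weyl generators. The paper's \emph{first} argument (that separability would force $\omega_{\mathsf{W},S}\in\overline{\mathrm{conv}}(\omega_F\otimes\mathbf{1})$) is a different, more structural line that you do not pursue; it relies on the preceding discussion about restrictions to $\mathcal{W}$ being Fock, which your explicit computation renders unnecessary.
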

\begin{proof}
Consider the Fock space representation of pure states $\omega_{\mathsf{W},F}$. Then any state on $\mathsf{W}$ can be written as  convex combination $\omega_{\mathsf{W}}=\sum_i\lambda_i\omega_{\mathsf{W},F,(i)}$ with  $\sum_i\lambda_i=1$. If the squeezed states $\omega_{\mathsf{W},S}$ on $\mathsf{W}$  is separable, then we must have $\mathcal{\omega}_{\mathsf{W},S}\in\overline{\text{conv}}(\omega_F\otimes {\bf1})$. Since the $\omega_{\mathsf{W},S}$ are not Fock states, so  $\mathcal{W}'\neq\eta{\bf1}$ and $\mathcal{\omega}_{\mathsf{W},S}\notin\overline{\text{conv}}(\omega_F\otimes {\bf1})$. Hence the  $\omega_{\mathsf{W},S}$ are  entangled.

Alternatively, we  would like to avoid using  $\omega_{\mathsf{W},F}$ which has not been explicitly constructed. Let us consider instead the subalgebra $\mathsf{W}_0$ corresponding to $\mathsf{M}_0$. Then the restriction of the domain of $\omega_{\mathsf{W},S}$ to $\mathsf{W}_0$ is the squeezed states $\omega_S$ on $\mathcal{W}_F$, and it has the expression $\omega_S[W]=\text{Tr}(\rho_S\pi_F(W))$ for $W\in\mathcal{W}$ in the sense of local normal folium.
By the definition of $\omega_S$ \eqref{38}, we have that $\rho_S$ is reducible and has the convex decomposition $\rho_S=\sum_n\lambda_n\rho_F^{(n)}$ where $n$ labels the particle number in the Fock space.
By the algebraic approach \cite{BGdR13}, the von Neumann entropy between $\omega_{\mathsf{W}_0,S}$ and  $\omega_{\mathsf{W}-\mathsf{W}_0,S}$ is $S=-\sum_n\lambda_n\ln\lambda_n\neq0$. Therefore  $\omega_{\mathsf{W},S}$  is  entangled.
\end{proof}

Next, we need to find the Bogoliubov transformation that relates the squeezed vacuum states to the KMS  states and find a way to return to the Fock space representation. Let us first consider the simple case of $\ket{\Omega_S}$:
\begin{theorem}
Let  $\mathcal{W}$ be the Weyl $C^*$-algebra for GFT. For the squeezed vacuum state $\ket{\Omega_S}$ on $\mathsf{W}=\mathcal{W}\otimes{\mathcal{W}}'$, there exist a Bogoliubov transformation such that $\ket{\Omega_S}$ is transformed into the form of a bosonic TFD state with respect to the KMS states $\omega_{\text{KMS}}$ on $\mathcal{W}$:
\begin{equation}\label{42}
\ket{\Omega_S}=\frac{1}{\sqrt{Z}}\sum_{n=0}^\infty e^{-in\beta u_{*}/2}\ket{\{g_i\},n}\ket{\{g'_i\},\tilde{n}}
\end{equation}
where $u_{*}$ is an eigenvalue of $U_*$ \eqref{u}, and the Fock states are the GNS Fock states.
\end{theorem}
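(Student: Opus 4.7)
The plan is to bridge the two expressions for $\ket{\Omega_S}$ — the squeezed exponential form \eqref{40} on the one hand, and the thermal double sum \eqref{42} on the other — through the standard $SU(1,1)$ disentangling identity, and then to pin down the squeezing parameter by matching against the Gibbs form forced by the KMS condition (Lemma \ref{l2.10} and the Kotecha–Oriti theorem).

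First I would pass to the two–mode squeezed vacuum by substituting $\tilde{\Psi}(\psi')$ for the second copy of $\Psi(\psi')$ in \eqref{40}, as noted in the paragraph preceding the theorem. Writing the generator in the $\mathfrak{su}(1,1)$ form $K_{+}=\Psi^{\dag}(\psi)\tilde{\Psi}^{\dag}(\psi')$, $K_{-}=\Psi(\psi)\tilde{\Psi}(\psi')$, $K_{0}=\tfrac{1}{2}(\Psi^{\dag}\Psi+\tilde{\Psi}^{\dag}\tilde{\Psi}+\mathbf{1})$, and using the CCR \eqref{24} together with the independence of the tilde sector, I would apply the normal-ordered disentangling identity
\begin{equation}
e^{\tfrac{b}{2}K_{+}-\tfrac{\bar{b}}{2}K_{-}}=e^{\tanh|b|\,e^{i\arg b}\,K_{+}/|b|}\,e^{-\ln\cosh|b|\cdot 2K_{0}}\,e^{-\tanh|b|\,e^{-i\arg b}\,K_{-}/|b|}
\end{equation}
to the Fock vacuum $\ket{\Omega}\otimes\ket{\tilde{\Omega}}$. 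Since $K_{-}$ annihilates the vacuum and $K_{0}$ acts as a c-number there, this collapses to $(\cosh|b|)^{-1}\exp(\tanh|b|\,e^{i\arg b}\,K_{+})\ket{\Omega}\otimes\ket{\tilde{\Omega}}$, and expanding the exponential in powers of $K_{+}$ produces exactly the double sum $\sum_{n}\lambda^{n}\ket{\{g_{i}\},n}\ket{\{g'_{i}\},\tilde{n}}$, with $\lambda=\tanh|b|\,e^{i\arg b}$ and an overall factor which we relabel $1/\sqrt{Z}$.

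Next I would fix the parameter $b$ by demanding that $\ket{\Omega_{S}}$ coincide with the GNS vector $\ket{\Omega_{\omega_{\text{KMS}}}}$ of a KMS state with respect to the collective left translations $\alpha_{\{g'_{i}\}}$. By Theorem 2.11, the associated density operator is $\rho_{g}=e^{-\beta U_{*}(X,g)}/Z$ with $U_{*}$ given in \eqref{u}. Diagonalising $U_{*}$ on the single-polyhedron sector with eigenvalue $u_{*}$ and lifting to the $n$-particle sector (where it acts with eigenvalue $n u_{*}$, since $U_{*}$ is a second-quantised one-body operator built from the Lie derivative), the purification of $\rho_{g}$ in the doubled Fock space is precisely $Z^{-1/2}\sum_{n}e^{-i n\beta u_{*}/2}\ket{\{g_{i}\},n}\ket{\{g'_{i}\},\tilde{n}}$. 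Identifying the two expansions gives the Bogoliubov relation $\tanh|b|\,e^{i\arg b}=e^{-i\beta u_{*}/2}$, the direct analogue of \eqref{bogo}, which both yields \eqref{42} and exhibits the Bogoliubov transformation explicitly as a passage from the $(\Psi,\tilde{\Psi})$ operators to a $\beta$-dependent pair with annihilator annihilating $\ket{\Omega_{S}}$.

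The main obstacle I anticipate is the eigenvalue step: $U_{*}$ is built from continuous group indices over $G^{\times n}$ and a priori has continuous spectrum, so the naive label $n$ in \eqref{42} requires a discrete mode decomposition. I would address this by restricting, as in Sec.\ \ref{s2.1}, to a single occupation mode $\psi$ in the quasi-local folium — equivalently, restricting the KMS flow to a finite factor $\mathcal{B}(\mathcal{H}_{N})$ where $U_{*}$ is self-adjoint with pure point spectrum — and then taking the $C^{*}$-inductive limit. One must also check that the algebraic squeezed state \eqref{38} with the covariance Re$\{c(\psi,\bar\psi)\}$ chosen to match $(\cosh|b|)^{-2}$ indeed coincides, on $\mathcal{W}\otimes\tilde{\mathcal{W}}$, with the restriction of $\omega_{\text{KMS}}$; this follows from uniqueness of the GNS triple once $\braket{\Omega_{S}|\pi_{F}(W(\psi))|\Omega_{S}}$ is shown to satisfy the KMS condition with generator $\bar{H}=U_{*}-J U_{*}J$, which is guaranteed by Theorem 3.2 applied to the modular operator of $\ket{\Omega_{S}}$.
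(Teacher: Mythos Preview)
Your argument is correct and reaches the target, but it proceeds by a different mechanism than the paper. You work entirely in the Fock representation: you invoke the $\mathfrak{su}(1,1)$ disentangling identity to normal-order the squeezing exponential, expand on the vacuum, and then fix $b$ by identifying the resulting sum with the canonical purification of $\rho_g$. The paper, by contrast, stays at the level of the Weyl $C^*$-algebra: it realises the Bogoliubov transformation as a $*$-automorphism $\alpha_T(W(\psi))=W(T\psi)$ for an explicit symplectic $T=\cosh(\wp)+\jmath\sinh(\wp')$, appeals to the Honegger--Rieckers result for its unitary implementability, and then matches against the KMS state by factorising the quasi-free generating functional $\omega_{\text{KMS}}(W(\psi))=e^{-\cosh 2\chi\,(\psi,\psi)/4}$ as a tensor product $\omega_F(W(\cosh\chi\,\psi))\otimes\omega_F(W(\sinh\chi\,\psi))$ with $\cosh 2\chi=\coth(i\beta u_*/2)$. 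Your route is computationally more transparent and closer to the quantum-optics derivation of the TFD vacuum; the paper's route keeps the argument inside the algebraic framework set up in Sec.~\ref{s2.1}, which makes the passage to the inductive-limit/folium picture more immediate and exhibits the Bogoliubov map as a symplectic transformation on the test-function space rather than as an operator identity. Your remark on the spectral obstacle for $U_*$ is well taken and is handled in the paper only implicitly, by restricting to a fixed eigenvalue $u_*$ under the equilibrium invariance of $\rho_g$.
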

\begin{proof}
In the simpler case where $b\in\mathbb{R}$,  the thermal vacuum  \eqref{40} has the same form as in \eqref{37}, and hence the Bogoliubov transformation similar to \eqref{bogo} gives the TFD states in the standard form. 

In the more general algebraic case, we need to find the *-automorphism $\alpha_T$ on $\mathsf{W}$ such that $\alpha_T(W(\psi))=W(T\psi)$ for $W(\psi)\in{\mathsf{W}}$ and a  symplectic transformation $T$ on the space of  $\psi$'s. 
Notice that in \eqref{40} $b<\infty$, so by Theorem 2 of \cite{HR97} there exists a unitary $U$ on the GNS  space $\mathcal{H}_\mathsf{W}$ such that  $\pi_\mathsf{W}(\alpha_T(W))=U\pi_\mathsf{W}(W)U^*$. 
Denote $\mathfrak{G}\equiv\frac{b}{2}\Psi^\dag(\psi)\tilde{\Psi}^\dag(\psi')+\frac{\bar{b}}{2}\Psi(\psi)\tilde{\Psi}(\psi')$ and  consider $\pi_\mathsf{W}=\pi_{\mathsf{W},F}$ as the Fock representation, then we have  $U=e^{-it\mathfrak{G}}$.
Therefore, we can choose
\[\alpha_T(W(\psi))=UW(\psi)U^*=e^{-it\mathfrak{G}}W(\psi)e^{it\mathfrak{G}}=W(T\psi),\]
with the symplectic transformation 
\[
T=\cosh(\wp)+\jmath\sinh(\wp'),\quad \wp= \delta\bigl(\psi(\{g_i\}),\abs{b}\bigr),~ \wp'= \delta\bigl(\psi'(\{g_i\}),\abs{b}\bigr)
\]
where  $\jmath=-1$ for the bosonic case.
When acting on the field operators, $\alpha_T$ gives the Bogoliubov transformations of the field operators on the Fock space $\mathcal{H}_{\mathsf{W},F}$,
\begin{equation}\label{pprime}
\alpha_T(\Psi(\psi))=\Psi(\cosh\wp\psi)+\tilde{\Psi}^{\dag}(-\sinh\wp'\psi')=\phi\cosh\abs{b}-\tilde{\phi}^\dag\sinh\abs{b}.
\end{equation}
By choosing $b$ as in \eqref{bogo} and using the Gibbs states $\rho_g$  on $\mathcal{W}$, we can obtain the TFD states.
\begin{equation}\label{stan}
\ket{\Omega_S}=\sqrt{1-e^{-i\beta u_*}}\sum_{n=0}^\infty e^{-in\beta u_*/2}\ket{\{g_i\},n}\ket{\{g'_i\},\tilde{n}}.
\end{equation}
This is because by  setting the Bogoliubov operator $\chi$ as $\cosh2\chi=\coth({i\beta u_*}/2)$, we can rewrite the KMS state $\omega_{\text{KMS}}$ on $\mathcal{W}$ in the Fock representation as $\omega_{\text{KMS}}(W(\psi))=e^{-\cosh2\chi(\psi,\psi)/4}$. Since $\cosh2\chi=\cosh^2\chi+\sinh^2\chi$, we have
\[\omega_{\text{KMS}}(W(\psi))=e^{-\frac{1}{4}(\cosh^2\chi+\sinh^2\chi)(\psi,\psi)}=\omega_F(W(\cosh\chi\psi))\otimes\omega_F(W(\sinh\chi\psi)).
\]
Hence, by choosing $\chi\psi=\abs{b}\psi$, we can obtain the TFD form \eqref{stan}.
Since in equilibrium the Gibbs state $\rho_g$ is invariant under the group-translation flow $\alpha_{\{g_i\}}$, it is safe for the purpose of proving existence to take a constant eigenvalue $u_*$ of $U_*$ in $\rho_g$. In this way, the partition function of $\rho_g$ becomes $Z=\sum_ne^{-\beta g_n}=(1-e^{-i\beta u_*})^{-1}$, and $\ket{\Omega_S}$ takes the standard form. 
\end{proof}
\begin{corollary}
Let $\ket{\Omega_S}$ be as above. Then both $\omega_S$ on $\mathsf{W}$ and the reduced state to $\mathcal{W}$ are KMS with respect to a  group-translation flow $\alpha_{\{g_i\}}$ on the group fields.
\end{corollary}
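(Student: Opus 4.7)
The plan is to reduce the KMS properties for both $\omega_S$ on $\mathsf{W}$ and its restriction to $\mathcal{W}$ to two ingredients already in place: the explicit form \eqref{stan} of $\ket{\Omega_S}$ together with the Kotecha-Oriti Gibbs form $\rho_g=e^{-i\beta U_*}/Z$, and the Tomita-Takesaki modular machinery of Section \ref{s3.1} applied to the doubled algebra.

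I would first dispose of the easier assertion, the reduction to $\mathcal{W}$. Using \eqref{stan} and the orthonormality $\braket{\tilde{m}|\tilde{n}}=\delta_{mn}$, a direct computation gives
\[
\omega_S(W(\psi)\otimes{\bf1})=(1-e^{-i\beta u_*})\sum_{n=0}^\infty e^{-in\beta u_*}\braket{\{g_i\},n|\pi_F(W(\psi))|\{g_i\},n}=\text{Tr}(\rho_g\pi_F(W(\psi))),
\]
which is precisely the $\omega_{\text{KMS}}$ of the Kotecha-Oriti theorem. Since that state is KMS with respect to $\alpha_{\{g_i\}}$ by construction, the reduced state inherits the KMS property.

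For the full state $\omega_S$ on $\mathsf{W}$ I would invoke Theorem \ref{req} applied to the doubled algebra $\mathsf{M}=\mathcal{M}\otimes\mathcal{M}'$. The vector $\ket{\Omega_S}$, being the GNS vacuum of $\omega_{\text{KMS}}$, is cyclic and separating for the subalgebra $\pi_{\mathsf{W},F}(\mathcal{W})\otimes{\bf1}$. Under the identifications $\hat{\rho}_\Omega\mapsto\rho_g$ and $\hat{\tilde{\rho}}_\Omega\mapsto\tilde{\rho}_g$, the modular operator of Theorem \ref{req} takes the Gibbs form $\hat{\Delta}=e^{-i\beta(U_*\otimes{\bf1}-{\bf1}\otimes\tilde{U}_*)}$, so the modular flow is exactly the doubled translation $\alpha_{\{g_i\}}\otimes\tilde{\alpha}_{\{g_i\}}^{-1}$, where $\tilde{\alpha}$ is the $J$-conjugate of $\alpha$ acting on $\mathcal{W}'$. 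The intrinsic KMS property of any cyclic-separating vector with respect to its own modular flow then yields the claim on $\mathsf{W}$, with the flow restricting to $\alpha_{\{g_i\}}$ on the $\mathcal{W}$ factor.

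The main obstacle I anticipate is aligning the unitary implementer $U=e^{-it\mathfrak{G}}$ of the previous theorem with the $J$-conjugated translation generator $U_*\otimes{\bf1}-{\bf1}\otimes\tilde{U}_*$ appearing in $\hat{\Delta}$, so as to confirm that the modular flow really coincides with the group-translation flow on $\mathsf{W}$ and is not only formally equal on matrix elements. A secondary subtlety is that \eqref{stan} was obtained by specializing to a single eigenvalue $u_*$ of $U_*$; for a general spectral decomposition one replaces $e^{-in\beta u_*/2}$ by a spectral integral over $U_*$, and convexity of the KMS condition under such averaging preserves the conclusion.
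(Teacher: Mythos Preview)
Your treatment of the reduced state is essentially identical to the paper's: the paper also writes $\rho_S=\ket{\Omega_S}\bra{\Omega_S}$, traces out the tilde factor using $\braket{\{g'_i\},\tilde n|\{g'_i\},\tilde m}=\delta_{nm}$, recovers the Gibbs density $\rho_{\mathcal{W}}=\frac{1}{Z}\sum_n e^{-in\beta u_*}\ket{\{g_i\},n}\bra{\{g_i\},n}$ on $\mathcal{W}$, and then simply remarks that a Gibbs state is KMS with respect to the flow generated by $\mathcal{G}=iU_*$, citing standard TFD arguments (Das). So on this half you match the paper exactly.

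Where you diverge is on the full state $\omega_S$ on $\mathsf{W}$. The paper does \emph{not} invoke Theorem~\ref{req} or any explicit modular-operator computation on the doubled algebra; it treats the KMS property of $\omega_S$ as part of the same ``standard arguments'' package, once the reduced state has been identified as Gibbs. Your route through the Requardt localized picture and the modular operator $\hat\Delta=\rho_g\otimes\tilde\rho_g^{-1}$ is more explicit and arguably more informative, since it actually names the flow on $\mathsf{W}$ (the doubled translation $\alpha\otimes\tilde\alpha^{-1}$) rather than leaving it to a citation. One caveat: Theorem~\ref{req} as stated concerns intermediate type~I factors arising from a split inclusion, which is not literally the situation here; what you are really using is the general Tomita--Takesaki fact that a cyclic separating vector is automatically KMS for its own modular flow, together with the explicit Gibbs form of $\hat\Delta$ (this is closer to Ojima's Theorem~3.1 than to Requardt's). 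If you rephrase the appeal in those terms, the argument is clean and your anticipated ``obstacle'' about aligning $e^{-it\mathfrak G}$ with $U_*\otimes{\bf1}-{\bf1}\otimes\tilde U_*$ becomes unnecessary: one does not need to compare generators, only to recognize $\ket{\Omega_S}$ as the GNS vector of the Gibbs state and read off its modular flow.
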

\begin{proof}
Let $\rho_S=\ket{\Omega_S}\bra{\Omega_S}$. By using $\braket{\{g'_i\},\tilde{n}|\{g'_i\},\tilde{m}}=\delta_{nm}$, we retain the Gibbs states on $\mathcal{W}$,
\[
\rho_\mathcal{W}=\text{Tr}_{\mathcal{W}'}\rho_S=\frac{1}{Z}\sum_ne^{-in\beta u_*}\ket{\{g_i\},n}\bra{\{g_i\},n}.\]
Consider the group-translation flow $\alpha_{\{g_i\}}$ on $\mathcal{W}$ with the same generator $\mathcal{G}=iU_*$ as in $\ket{\Omega_S}$. The KMS condition is satisfied for Gibbs states following the standard arguments \cite{Das97}, with the Hamiltonian replaced by the generator $\mathcal{G}$.
\end{proof}

 We remark that the single-mode expression \eqref{42} can be changed to the multi-mode expression by considering the multi-mode expansion of the test functions $\psi$. For instance, consider the expansion of $\psi(g)$ with respect to the SU(2) group basis, then
\begin{equation}
\Psi(\psi)=\int_{G}\prod_{i=1}^nd\mu(g) \overline{\sum_j\text{Tr}[\psi^jD^j(g)])} \phi(\{g_i\}),\quad j\in\frac{\mathbb{N}}{2},
\end{equation}
where $D^j$ is the Wigner matrix. Since in \eqref{pprime} we have identified $\abs{b}$ with $\psi$, now the squeeze parameter $b$ also has an expansion into $b_j$, and as a consequence, the eigenvalue $u_*$ has a corresponding expansion into $u_{*,j}$. Then
\begin{equation}
\ket{\Omega_S}\sim\sum_{n=0}^\infty\sum_{j\in\mathbb{N}/2} e^{-in\beta u_{*,j}/2}\ket{\{g_i\},n}\ket{\{g'_i\},\tilde{n}}.
\end{equation}
Moreover, since the CCR algebra structure is the same for all finite or infinite $N$, the $\ket{\Omega_S}$ can be extended to the full Fock space. Written in terms of the multi-particle occupation number basis, it becomes
\begin{equation}
\ket{\Omega_S}=\sum_{N_i=1}^\infty\sum_{n_i=0}^{N_i}\sum_{j\in\mathbb{N}/2} M(N_i,n_i,j) e^{-in_i\beta u_{*,j}/2}\ket{\{g_i\};\{n_i\}}\ket{\{g'_i\};\{\tilde{n}_i\}}.
\end{equation}
where $M(N_i,n_i,j)$ is a normalization factor and $N_i$ is the number of particles in the subspace $\mathcal{H}^{N_i}$.

The above TFD states are constructed as the squeezed vacua on some known product algebra $\mathsf{W}$ which does not obviously correspond to the doubled Hilbert-space formulation of TFD (since $\mathcal{M}\neq\mathcal{M}'$).
 In view of the  localized finite-dimensional version of TFD (cf. Theorem \ref{req}), we consider alternatively  the TFD for local normal folium: Consider the GFT algebras $\mathcal{B}(\mathcal{H}_{N_1}),\mathcal{B}(\mathcal{H}_{N_2})$ with $N_2>N_1$ and $\mathcal{B}(\mathcal{H}_{N_1})\subset\mathcal{B}(\mathcal{H}_{N_2})$. Suppose the split property holds, then there is an intermediate type I factor $\mathcal{B}$ such that  $\mathcal{B}(\mathcal{H}_{N_1})\subset\mathcal{B}\subset\mathcal{B}(\mathcal{H}_{N_2})$ and $\overline{\mathcal{B}\lor\mathcal{B}'}=\mathcal{B}(\mathcal{H})$. The states in localized space $\mathcal{H}_\mathcal{B}$ are in fact localized in $\mathcal{B}(\mathcal{H}_{N_2})$: for $\psi\in\mathcal{H}_\mathcal{B}$ and $B'\in\mathcal{B}({\mathcal{H}_{N_2}})'$,
 \[(\psi,B'\psi)=(W^*(\psi\otimes\Omega),B'W^*(\psi\otimes\Omega))=
 ((\psi\otimes\Omega),WB'W^*(\psi\otimes\Omega))=(\Omega,B'\Omega).
 \]
 And similarly, the states in   $\mathcal{H}_{\mathcal{B}'}$ are  localized in $\mathcal{B}(\mathcal{H}_{N_1})'$. As the operators in $\mathcal{B}$ and $\mathcal{B}'$ are related by $J$, the operator contents on $\mathcal{H}_{\mathcal{B}}$ and $\mathcal{H}_{\mathcal{B}'}$ are approximately the same up to the ``boundary'' operators.
 Therefore, by Theorem \ref{req} we have the finite-dimensional TFD states in  $\mathcal{H}_{\mathcal{B}}\otimes\mathcal{H}_{\mathcal{B}'}$ conforming to the doubled Hilbert-space formulation.
 
Notice that in  equilibrium the ``temperature'' parameter $\beta$'s  in the global and the localized cases should be the same. But in more general nonequilibrium cases
 the KMS condition for the states localized  $\mathcal{H}_{\mathcal{B}}\otimes\mathcal{H}_{\mathcal{B}'}$ could be  different from the global KMS condition in more general nonequilibrium cases, e.g. with different  $\beta$'s. 
 
\subsection{Graphic representation}
Given a gauge-invariant group field $\phi(\{g_i\})^\dag$, we can interpret it  as creating a dual quantum polyhedron in the quantum geometry of LQG with each $g_i$ on the edge dual to a face of the polyhedron. In the second quantization formalism, this gives rise to a single-polyhedron Fock state (in terms of field operators) $\Psi(\{g_i\})^\dag\ket{\Omega}\equiv\ket{\triangle^1}$  and likewise for $\ket{\triangle^n}$ of $n$ independent polyhedra. These independent polyhedra can be glued 
according to the interaction term \eqref{9} in the GFT action or  the tracial observables \eqref{15}. 
These group fields constitute the Weyl algebra $\mathcal{W}$ or the von Neumann algebra $\mathcal{M}$. 
Now the TFD extension adds $\tilde{\mathcal{M}}=\mathcal{M}'$ to form $\mathcal{M}\otimes\mathcal{M}'\equiv\mathsf{M}$. For reasons explained in Sec. \ref{s3.1}, $\mathcal{M}'$ differs from $\mathcal{M}$ in general but with the same Fock space structure, which means the  field operators  $\tilde{\Psi}(\{g'_i\})$ for $\mathcal{M}'$ and $\Psi(\{g_i\})$ (or the wavefunctions $\psi$) are different functions on $G^{\times n}$, and they satisfy $[\Psi(\{g_i\}),\tilde{\Psi}(\{g'_i\})]=0$ and  $[\Psi(\{g_i\}),\tilde{\Psi}(\{g'_i\})^\dag]=0$.\footnote{In  the conventional formuation of second quantization, the field operators are independent of the choice of the complete set of single-body wavefunctions, if two different sets of creation/annihilation operators can be related, e.g. by $b=\sum_i\psi_j^*\psi_ia_i$. Here we do not have such transformations as the original and ``tilde" field operators are commutative to each other.
} 
 The structure of $\mathsf{M}$ allows us to take $\tilde{\phi}=\phi$, so that the differences between $\Psi$ and $\tilde{\Psi}$ are solely in the test wavefunctions $\psi\neq{\psi'}$.
 The Fock states on $\mathsf{M}$ are then doubled to be $\ket{\blacktriangle^n}$, e.g. 
\[\Psi(\{g_i\})^\dag\otimes\tilde{\Psi}(\{g'_i\})^\dag\ket{\Omega}\otimes\ket{\Omega}\equiv\ket{\blacktriangle^1}.\]
The doubled states $\ket{\blacktriangle^n}$ are further transformed into the entangled squeezed vacuum state 
\begin{equation}\label{43sup}
\ket{\Omega_S}\equiv\ket{\blacktriangle}_{\text{TFD}}=\sum_ne^{-\beta\mathcal{G}_n}\ket{\blacktriangle^n},\end{equation}
 which is the TFD state for  a GFT in equilibrium with respect to a flow on group fields. Note that in the proof of \eqref{42}  the differences in $\psi,\psi'$ have been wiped out in the Bogoliubov transformation \eqref{pprime}, whereby the differences in the observable algebras are invisible at the state level. 

By  definition, $\mathcal{M}=\pi_\omega^{\prime\prime}(\mathcal{W})\subset \mathcal{B}(\mathcal{H}_\omega)$ are constructed from the representation of the  Weyl algebra of the gauge-invariant group fields, and hence both $\mathcal{M}$ and $\mathcal{M}'$ contain gauge-invariant operators in $\mathcal{B}(\mathcal{H}_\omega)$. 
Consequently,  we can envision each quantum polyhedron originally created by $\Psi(\{g_i\})^\dag$ to be decorated by an additional layer of polyhedral data created by  $\tilde{\Psi}(\{g_i'\})^\dagger$, and then they are  superposed into a TFD state.
As an example, consider the a single-tetrahedron state $\ket{\triangle^1}$ created by $\phi(\{g_i\})^\dag,i=1,2,3,4$. Then the Fock states on $\mathsf{M}$ can be depicted (in  the dual quantum geometry) as the ``fat" graphs: 
\begin{equation}\label{316}
\ket{\triangle^1}= \ket{\vcenter{
\xymatrix@-0.9pc{
&  \ar@{-}[d] \\
& *-<2pt>[o][F-]{ }\ar@{-}[d]\ar@{-}[r]\ar@{-}[l] &\\
&&}}}\quad\Rightarrow\quad\ket{\blacktriangle^1}=
\ket{\vcenter{
\xymatrix@-1pc{
&  \ar@{=}[d] \\
& *-<2pt>[o][F-]{ \bullet}\ar@{=}[d]\ar@{=}[r]\ar@{=}[l]&\\
&&}}}.
\end{equation}
And the TFD states are obtained by the superposition $\ket{\blacktriangle}_{\text{TFD}}$ \eqref{43sup}.

A rough reason for still putting the  states of $\mathcal{M}'$ on the original quantum geometry is that the $\mathcal{M}'$ is obtained from $\mathcal{M}$ and they are both subsets of $\mathcal{B}(\mathcal{H}_\omega)$. For simplicity, we can identify $g_i=g'_i$, so that they live on the same edge. In the homogeneous GFT states discussed in the next section, this identification is given by the wavefunction homogeneity condition.
 The superposition in $\ket{\blacktriangle}_{\text{TFD}}$ indicates another reason for this:
a single-polyhedron ``tilde" state $\tilde{\Psi}(\{g'_i\})^\dag\ket{\Omega}$ is expected to correspond to a single-polyhedron $\Psi(\{g_i\})^\dag\ket{\Omega}$, in the sense of particle-hole excitations in a ``Dirac sea''.  The superposition of these particle-hole excitations constitutes the collective TFD vacuum, which could be interpreted as the superposed states of vacuum fluctuations \cite{Req13}. In this sense, the fat graphs  can be intuitively understood as the quantum polyhedra  with vacuum fluctuations taken into consideration.

\section{Group field theory condensates at finite $\beta$}\label{s4}
The GFT condensates are proposed to approximate the continuum geometries as the BEC of the group fields that encode the fundamental building blocks of simplicial quantum gravity. As typical collective excitations, the squeezed states created by group fields in the second quantization formalism also rise to the  GFT condensate states. Since the TFD states are two-mode squeezed states on the Fock vacuum, they are also GFT condensate states.
We will show that the equilibrium parameter $\beta$ distinguishes such TFD condensates from other types of GFT condensates.

We start with the  homogeneous GFT condensate states in the usual sense \cite{GOS13}:
a GFT condensate state is said to be homogeneous if the 
$n$-``polyhedra'' GFT states in its  series expansion are product  states  with the same wavefunction $\psi$ assigned to each single-``polyhedron'' GFT state in the product. 
A homogeneous GFT condensate state then corresponds to a homogeneous continuum (quantum) space characterized by the single collective wavefunction $\psi$.
The homogeneous GFT condensate states in this sense do not contain additional topological information such as the gluings of graphs. To include the topological information encoded in the graphs, the generalized homogeneous GFT condensate states  are defined in  \cite{OPRS15}. In \cite{OPRS15}, the  homogeneity condition is  for the vertex wavefunctions, that is, the single-vertex/polyheron  states (in the  series expansion of a generalized homogeneous GFT condensate state)   are created by the GFT field operators having the same single-``polyhedron'' wavefunction $\psi$; the gluing information of graphs is encoded in the convolutions of the auxillary group elements $h$ in the vertex wavefunctions $\psi(hg)$. (Left and right gauge invariance should be imposed on $\psi$ to constrains the contribution from the additional information.) In this way,   we have in effect the homogeneous wavefunctions over discrete simplicial geometries, which allows  the coarse-grainings or fine-grainings  by changing the graphs, while retaining the vertex wavefunction homogeneity.
In this section, by homogeneous we mean the latter vertex wavefunction homogeneity allowing gluings.

Now for TFD states, we have the similarly  ``tilde'' homogeneous GFT condensate states; 
we also expect that there will be a condition for matching the ``tilde'' and ``untilde'' parts of the TFD states, so that the TFD states can have the graphic representation as \eqref{316}.
To this end, we consider the  matching condition on the ``tilde'' and ``untilde'' group field operators: $\Psi(\{g_i\})$ and $\tilde{\Psi}(\{g_i\})$ are matched if their vertex wavefunctions are the same, i.e. $\psi=\tilde{\psi}$. This way, the two distributions of group data over the polyhedra they generate are the same, so that we can represent them as doubled fat graphs as \eqref{316}. On the other hand, a TFD state for GFT can be formally considered as 
 a squeezed state on the  doubled Fock vacuum $\ket{\Omega}\otimes\ket{\Omega}$. However, by matching  $\Psi(\{g_i\})$ and $\tilde{\Psi}(\{g_i\})$  and putting these two parts on the same graph, we  have in effect a ``single-pole'' condensate on   the single-vacuum $\ket{\Omega}$. In other words, the discrete geometry at finite $\beta$ is still given by the GFT states on the single-vacuum $\ket{\Omega}$, but now decorated by the new data specified by the flow. 
 In a similar way, we can  consider for example  the dipole GFT condensate states on a single-vacuum, and represent the corresponding   discrete geometry at finite $\beta$ by the fat graphs.

The  TFD states add twofold new data about the flow to the  GFT condensates: one is the ``tilde"   states   which can been put on the original quantum geometry; the other is the Gibbs states determining the squeezing induced by the flow. As a consequence, in addition to the homogeneity considerations, we also have the equilibrium condition:
\begin{definition}
In a GFT with field operators $\Psi(\{g_i\}),\Psi^\dag(\{g_i\}),i\geqslant4$ and a  group-translation flow $\alpha_{\{g_i\}}$, the GFT states are said to be in equilibrium with respect to $\alpha_{\{g_i\}}$ at the equilibrium parameter $\beta$ if their TFD states have the same parameter $\beta$.
\end{definition}
At a uniform $\beta$, one can choose a homogeneous seed state $\ket{\text{seed}}$  of GFT condensate and then  refine it topologically,  e.g. $\hat{r}\ket{\text{seed}}$ as in \cite{OPRS15}.   
Let us  consider instead the TFD states with different $\beta$.  In doing this, we will be in a globally nonequilibrium situation where the local equilibrium is retained.\footnote{In a similar situation in algebraic quantum field theories, these $\beta$'s are not only equilibrium parameters but also order parameters for symmetry breaking. Cf. \cite{Oji03}.}

Since the the parameters $\beta$ or the Gibbs states are new data added to the quantum geometry represented by GFT, we can specifically consider different condensate states without these new data related by the refinement operators $\hat{r}$, and then add to each layer $r$ of refinement the Gibbs states with a fixed $\beta_r$. Consequently, on each layer of some shell condensate states there is a distinct algebraic symmetry with equilibrium parameter $\beta$.\footnote{Note that when gluing different condensate states, the gluing edge generically breaks the closure constraint (closure defect) \cite{AC16}. The problem of how to relate different algebraic symmetries defined with different $\beta$ does not arise in that the presence of closure defects breaks the gauge invariance of group fields, and hence the flow defined on the gauge-invariant  group fields cannot be defined across the boundary.}
 If $\beta$ can be interpreted as the inverse temperature, we then envisage a  relation between  $\beta$ and the refinement layer $r$ in analogy to the Tolman-Ehrenfest effect for equilibrium temperatures in a gravitational field.

Without loss of generality, let us work in the basic case of 4-fold group fields $\phi(\{g_i\}),i=1,2,3,4,$ representing the 4-valent vertices dual to quantum tetrahedra. The topological connectivity between group fields in a TFD condensate state is still described the edge-homogeneity  constraints.
 So the white-black colored group fields  are now decorated by the ``tilde"  states:
\begin{equation}
\ket{\vcenter{
\xymatrix@-0.9pc{
&  \ar@{-}[d] \\
& *-<1pt>[o][F-]{ \circ}\ar@{-}[d]\ar@{-}[r]\ar@{-}[l] &\\
&&}}}\equiv
\ket{\vcenter{
\xymatrix@-1pc{
&  \ar@{=}[d]^{1} \\
& *-<1pt>[o][F-]{ \bullet}\ar@{=}[d]_{3}\ar@{=}[r]_{4}\ar@{=}[l]_{2} &\\
&&}}},\quad
 \ket{\vcenter{
\xymatrix@-0.9pc{
&  \ar@{-}[d] \\
& *-<2pt>[o][F-]{ \bullet}\ar@{-}[d]\ar@{-}[r]\ar@{-}[l] &\\
&&}}}\equiv
\ket{\vcenter{
\xymatrix@-1pc{
&  \ar@{=}[d]^{1} \\
& *-<2pt>[o][F-]{ \bullet}\ar@{=}[d]_{3}\ar@{=}[r]_{2}\ar@{=}[l]_{4} &\\
&&}}}.
\end{equation}
Since the ``tilde" states are constrained by the edge-matching constraint,
 one can in principle refine the geometry by adding such colored vertices  in the same way as \cite{OPRS15}. The homogeneity condition  constrains  the  connectivity of group fields (or their dual quantum polyhedra) in a GFT condensate, but not the parameter $\beta$ in a GFT condensate state of  TFD form. For two group fields homogeneously connected in a TFD state at an equilibrium parameter $\beta$, we can depict the TFD superposition of two GFT Fock states by an additional dotted line, e.g. 
 \begin{equation}
\ket{\vcenter{
\xymatrix@-0.9pc{
&  \ar@{-}[d] \\
&  *-<1pt>[o][F-]{ \circ}\ar@{-}[d]\ar@{-}[r]\ar@{-}[l] &\\
&&}}}+
 \ket{\vcenter{
\xymatrix@-0.9pc{
&  \ar@{-}[d] \\
& *-<2pt>[o][F-]{ \bullet}\ar@{-}[d]\ar@{-}[r]\ar@{-}[l] &\\
&&}}}\quad\Rightarrow\quad
\ket{\vcenter{
\xymatrix@-0.9pc{
&  \ar@{-}[d]&\ar@{-}[d] \\
& *-<1pt>[o][F-]{ \circ}\ar@{-}[d]\ar@{-}[r]\ar@{-}[l] &*-<2pt>[o][F-]{ \bullet}\ar@{-}[d]\ar@{-}[r]\ar@/^/@{.}[l]^{\beta}&\\
&&&}}}
\end{equation}
where the gluing rules are obeyed such that only the edges on vertices of different colors (white$\neq$black)  with the same referencial index (e.g. $4=4$)  can be glued. In this way, the equilibrium condition constrains the algebraic symmetry of the group fields on top of the homogeneous condensate geometries and their refinements. 

Now the TFD states with the same GFT condensate structure can be distinguished by the different $\beta$s  parameterizing different algebraic symmetries. If we consider different local realizations of the TFD states in the finite-dimensional products of two Hilbert spaces, it is possible to find different local $\beta$-parameters  within a single global algebra. In the following, we consider examples with the different layers of GFT shell condensates and dress them with different local $\beta$s.
\begin{example}\label{e4.4}
Consider the homogeneous shell condensate states in GFT, which can model the horizon of a black hole \cite{OPS16}. The seed state for a homogeneous shell condensate is chosen to be the following state
 \begin{equation}
\ket{\text{seed}}_b=
\ket{\vcenter{
\xymatrix@-0.8pc{
& *-<1pt>[o][F-]{ \circ}\ar@/^/@{-}[d]\ar@/_/@{-}[d]\ar@{-}[r]\ar@{-}[l]_4 &*-<2pt>[o][F-]{ \bullet}\ar@/^/@{-}[d]\ar@/_/@{-}[d]\ar@{-}[r]&*-<1pt>[o][F-]{ \circ}\ar@{-}[r]_{1}\ar@/^/@{-}[d]\ar@/_/@{-}[d]&\\
&*-<2pt>[o][F-]{ \bullet}\ar@{-}[r]\ar@{-}[l]_4&*-<1pt>[o][F-]{ \circ}\ar@{-}[r]&*-<2pt>[o][F-]{ \bullet}\ar@{-}[r]_1&}}},
\end{equation}
with two boundaries, i.e. two sets of open edges with indices 1 and 4 respectively. Since the boundaries consist of open edges, we do not consider any flow on the boundary edges. 
 The refinement operators are 
\begin{equation}\label{4.5}
\hat{r}_W:\ket{\vcenter{
\xymatrix@-0.9pc{
&  \ar@{-}[d] \\
& *-<1pt>[o][F-]{ \circ}\ar@{-}[d]\ar@{-}[r]\ar@{-}[l] &\\
&&}}}\mapsto
\ket{\vcenter{
\xymatrix@-0.9pc{
&  \ar@{-}[d]&& \\
& *-<1pt>[o][F-]{ \circ}\ar@{-}[d]\ar@{-}[r]\ar@{-}[l] &*-<2pt>[o][F-]{ \bullet}\ar@{-}[r]&*-<1pt>[o][F-]{ \circ}\ar@{-}[r]\ar@/^/@{-}[l]\ar@/_/@{-}[l]&\\
&&&&}}},\quad
\hat{r}_B: \ket{\vcenter{
\xymatrix@-0.9pc{
&  \ar@{-}[d] \\
& *-<2pt>[o][F-]{ \bullet}\ar@{-}[d]\ar@{-}[r]\ar@{-}[l] &\\
&&}}}\mapsto
\ket{\vcenter{
\xymatrix@-0.9pc{
&  \ar@{-}[d]&& \\
&*-<2pt>[o][F-]{ \bullet}\ar@{-}[d]\ar@{-}[r]\ar@{-}[l] &*-<1pt>[o][F-]{ \circ}\ar@{-}[r]&*-<2pt>[o][F-]{ \bullet}\ar@{-}[r]\ar@/^/@{-}[l]\ar@/_/@{-}[l]&\\
&&&&}}},
\end{equation}
and at the boundaries, for example a boundary vertex with index 4,
\begin{equation}
\hat{r}_{W,\partial}:\ket{\vcenter{
\xymatrix@-0.9pc{
&  \ar@{-}[d]^4 \\
& *-<1pt>[o][F-]{ \circ}\ar@{-}[d]\ar@{-}[r]\ar@{-}[l] &\\
&&}}}\mapsto
\ket{\vcenter{
\xymatrix@-0.9pc{
&  \ar@{-}[d]^4&\ar@{-}[d]^{4'}& \ar@{-}[d]^{4^{\prime\prime}}\\
& *-<1pt>[o][F-]{ \circ}\ar@{-}[d]\ar@{-}[r]\ar@{-}[l] &*-<2pt>[o][F-]{ \bullet}&*-<1pt>[o][F-]{ \circ}\ar@{-}[r]\ar@/^/@{-}[l]\ar@/_/@{-}[l]&\\
&&&&}}},\quad
\hat{r}_{B,\partial}: \ket{\vcenter{
\xymatrix@-0.9pc{
&  \ar@{-}[d]^4 \\
& *-<2pt>[o][F-]{ \bullet}\ar@{-}[d]\ar@{-}[r]\ar@{-}[l] &\\
&&}}}\mapsto
\ket{\vcenter{
\xymatrix@-0.9pc{
&  \ar@{-}[d]^4&\ar@{-}[d]^{4'}& \ar@{-}[d]^{4^{\prime\prime}}\\
&*-<2pt>[o][F-]{ \bullet}\ar@{-}[d]\ar@{-}[r]\ar@{-}[l] &*-<1pt>[o][F-]{ \circ}&*-<2pt>[o][F-]{ \bullet}\ar@{-}[r]\ar@/^/@{-}[l]\ar@/_/@{-}[l]&\\
&&&&}}},
\end{equation}
where the open edges on the boundary are also refined. The refined shell condensate states are formally 
\begin{equation}
\ket{r}=\prod_rf_r(\hat{r}_{B},\hat{r}_{W},\hat{r}_{B,\partial},\hat{r}_{W,\partial})\ket{\text{seed}}_b
\end{equation}
 where $f_r$ is a function of the refinement operators.

 Given an infinite number of possible refinements of a GFT condensate, one can choose a particular refinement level $r_h$ of the shell condensate  to model the horizon of a (quantum) black hole. In  a generic shell condensate $\ket{r}$, the boundary with index 4 is not only the outer boundary of  $\ket{r}$ but also the  inner boundary of  $\ket{r+1}$ at the next level.  But the connectivity between two shell condensate does not specify a black hole horizon. We notice that, with hindsight, the event horizon of a black hole possesses the surface gravity $\kappa$ that relates the symmetry of the horizon to the temperature of Hawking radiation, and in particular, the {\it local} surface gravity $\bar{\kappa}=1/\ell$, where $\ell$ is the proper distance from an observer (or an apparent horizon) to the horizon \cite{FGP11}.
 Therefore, for the shell condensate at a chosen refinement level to be able to model a black hole horizon, we  expect the equilibrium parameter $\beta$ to play the role of $\bar{\kappa}$, encoding the symmetry of the horizon as well as the local Hawking temperature.

For instance, consider the horizon shell $\ket{r_h}$ at equilibrium parameter $\beta_h$, and its next refinement $\ket{r_h+1}=f_{r_h+1}(\hat{r}_{B},\hat{r}_{W},\hat{r}_{B,\partial},\hat{r}_{W,\partial})\ket{r_h}$ at $\beta'_h$. Parts of the boundary looks like the following
\begin{equation}\label{52}
\ket{\vcenter{
\xymatrix@-0.9pc{
&  \ar@{-}[d]&\ar@{-}[d] \\
& *-<1pt>[o][F-]{ \circ}\ar@{-}[d]\ar@{-}[r]\ar@{-}[l] &*-<2pt>[o][F-]{ \bullet}\ar@{-}[d]\ar@{-}[r]\ar@/^/@{.}[l]^{\beta}&\\
&&&}}}+
\ket{\vcenter{
\xymatrix@-0.9pc{
&  \ar@{-}[d]&\ar@{-}[d]& \ar@{-}[d]& \ar@{-}[d]\\
& *-<2pt>[o][F-]{ \bullet}\ar@{-}[d]\ar@{-}[r]\ar@{-}[l] & *-<1pt>[o][F-]{ \circ}\ar@/^/@{.}[l]^{\beta'}&*-<2pt>[o][F-]{ \bullet}\ar@{-}[r]\ar@{-}[l]\ar@/_/@{-}[l]\ar@/^/@{.}[l]^{\beta'}&*-<1pt>[o][F-]{ \circ}\ar@{-}[r]\ar@{-}[d]\ar@/^/@{.}[l]^{\beta'}&\\
&&&&&}}}\quad\Rightarrow\quad
\vcenter{
\xymatrix@-0.9pc{
&  \ar@{-}[d]&\ar@{-}[d]& \ar@{-}[d]& \ar@{-}[d]\\
&*-<2pt>[o][F-]{ \bullet}\ar@{-}[d]_{g_l}\ar@{-}[r]\ar@{-}[l] &*-<1pt>[o][F-]{ \circ}\ar@/^/@{.}[l]^{\beta'}&*-<2pt>[o][F-]{ \bullet}\ar@{-}[r]\ar@{-}[l]\ar@/_/@{-}[l]\ar@/^/@{.}[l]^{\beta'}&*-<1pt>[o][F-]{ \circ}\ar@{-}[r]\ar@{-}[d]\ar@/^/@{.}[l]^{\beta'}&(r_{h}+1)\\
& *-<1pt>[o][F-]{ \circ}\ar@{-}[d]\ar@{-}[l]\ar@{-}[rrr]&&&*-<2pt>[o][F-]{ \bullet}\ar@{-}[r]\ar@{-}[d]\ar@/^/@{.}[lll]^{\beta}&(r_h)\\
&&&&}}
\end{equation}
In \eqref{52} we require $\beta'>\beta$, or equivalently $\beta$ is a monotonically increasing function $\beta(r)$ of  the refinement level $r$. Then the $\beta'$ can correspond to the inverse of the local surface gravity $\bar{\kappa}$, if  the proper distance $\ell$ is also a monotonically increasing function $\ell(r)$ of the refinement level $r$. In other words, we can take 
\begin{equation}\beta'(r_h+1)\propto\ell(r_h+1),\end{equation}
 so that $\beta'$ outside the horizon shell can be interpreted as the inverse local Hawking temperature. 
Meanwhile, in \eqref{52} the edge labeled by the group element $g_l$ is dual to a quantum area patch in the LQG quantum geometry, which gives rise to the constraint that the sum over the area eigenvalues in $\ket{r_h}$ is the area $A$ of the horizon.
 The proper distance $\ell$ should be defined  with respect to this horizon, but this is not given in the quantum geometries. The TFD states of GFT, {\it a posteriori}, define the $\ell$ in terms of the equilibrium parameter $\beta$.
\end{example}
We remark that the use of TFD states in  Example \ref{e4.4} is different from \cite{Isr76}: on the one hand, here a TFD  state only defines single GFT shell condensate decorated by  ``tilde" states, instead of a double states; on the other hand, the TFD state entangles the vacuum fluctuation modes or the ``particle-hole'' hole excitations, which is not necessarily the trans-horizon entanglement.
\begin{example}
Let us turn to the GFT condensate quantum cosmology. For a GFT coupled with massless scalar fields $\varphi$, the group fields are defined on $G^{\times n}\times\mathbb{R}^{n}$. The scalar fields can be used as a relational clock if they are monotonic on $\mathbb{R}^n$, and then the GFT can be deparameterized to have a relational Hamiltonian generating the dynamics \cite{WE19}. This generic method, when applied to the cosmological case, has been illustrated in a toy model \cite{AGW18} where the deparameterized Hamiltonian is the squeezing operator (in the spin basis). 
In GFT at equilibrium parameter $\beta$,  the TFD state $\ket{\Omega_S}$ can be written in the form of a squeezed  state as \eqref{40},
\begin{equation}\label{4.10}
\ket{\Omega_S}=e^{i\varphi\{\frac{b}{2}\Psi^\dag(\psi)\tilde{\Psi}^\dag(\psi')+\frac{\bar{b}}{2}\Psi(\psi)\tilde{\Psi}(\psi')\}}\ket{\Omega}\otimes\ket{\Omega},\quad \varphi\in\mathbb{R},b\in\mathbb{C},
\end{equation}
where we have written the parameter explicitly as the scalar field $\varphi$. By identifying the relational  Hamiltonian (in the second-quantized form) $H_\varphi=-\frac{b}{2}\Psi^\dag(\psi)\tilde{\Psi}^\dag(\psi')-\frac{\bar{b}}{2}\Psi(\psi)\tilde{\Psi}(\psi')$ with respect to $\varphi$, we see that the TFD squeezed state is the ``time" evolution of the Fock vacuum generated by $H_\varphi$. In other words, $\ket{\Omega_S}$ is a solution of the relational Schr\"odinger equation $i\frac{d}{d\varphi}\ket{\Omega_S}=H_\varphi\ket{\Omega_S}$. 

Notice that the scalar fields are not introduced in the  setup of this paper, the $\varphi$ in \eqref{4.10} are in fact functions $\varphi(\beta)$ of the equilibrium parameter $\beta$. But we can   consider the flow parameter $\beta(\varphi,b,n)$ as a monotonic function of $\varphi$ such that $\varphi(\beta)$ is also monotonic as an inverse function.

The relational evolution generated by $H_\varphi$ should be applicable to arbitrary initial states. Let us take  the seed state for spatial 3-sphere as the initial state,
 \begin{equation}
\ket{\text{seed}}_c=
\ket{\vcenter{
\xymatrix@-0.8pc{
& *-<1pt>[o][F-]{ \circ}\ar@/^/@{-}[d]\ar@/_/@{-}[d]\ar@{-}[r]\ar@{-}[d] &*-<2pt>[o][F-]{ \bullet}\ar@/^/@{-}[d]\ar@/_/@{-}[d]\ar@{-}[d]&\\
&*-<2pt>[o][F-]{ \bullet}\ar@{-}[r]&*-<1pt>[o][F-]{ \circ}&}}}.
\end{equation}
Then the squeezing   $e^{i\varphi H_\varphi}\ket{\text{seed}}_c\equiv\ket{\omega_S}$ generates a series of new vertices, where each term $\Psi^\dag(\psi)\tilde{\Psi}^\dag(\psi')$ creates a colored vertex. So, in contrast to the model in \cite{AGW18}, here the squeezing only refines $\ket{\text{seed}}_c$ in the same way as \eqref{4.5}. Again, we impose the additional constraint that at each  refinement level $r$ the GFT condensate state should be in equilibrium at $\beta_r$, which implies a equal-$\varphi(\beta_r)$ spatial hypersurface. Therefore, the refinement maps changes the equilibrium parameter $\beta$, e.g.
\begin{equation}
\ket{\vcenter{
\xymatrix@-0.9pc{
&  \ar@{-}[d]&\ar@{-}[d] \\
& *-<1pt>[o][F-]{ \circ}\ar@{-}[d]\ar@{-}[r]\ar@{-}[l] &*-<2pt>[o][F-]{ \bullet}\ar@{-}[d]\ar@{-}[r]\ar@/^/@{.}[l]^{\beta}&\\
&&&}}}\quad\mapsto\quad
\ket{\vcenter{
\xymatrix@-0.9pc{
&  \ar@{-}[d]&& & \ar@{-}[d]\\
&  *-<1pt>[o][F-]{ \circ}\ar@{-}[d]\ar@{-}[r]\ar@{-}[l] &*-<2pt>[o][F-]{ \bullet}\ar@/^/@{.}[l]^{\beta'}& *-<1pt>[o][F-]{ \circ}\ar@{-}[r]\ar@{-}[l]\ar@/_/@{-}[l]\ar@/^/@{-}[l]&*-<2pt>[o][F-]{ \bullet}\ar@{-}[r]\ar@{-}[d]\ar@/^/@{.}[l]^{\beta'}&\\
&&&&&}}}.
\end{equation}

Next, suppose that each colored vertex occupies the minimal volume $v_0$ predicted by the LQG quantum geometry, then the total quantum volume of the refined 3-sphere is $V=v_0n$ where $n$ is the number of vertices defined by the untilded number operator $N=\Psi^\dag(\psi){\Psi}(\psi)$. For example,  $n=4$ for $\ket{\text{seed}}_c$.
 The Friedman equation for GFT condensate quantum cosmology is obtained by calculating the evolution of $V$ as a scale factor with respect to $\varphi$. Without going into details, we observe that at a refinement level $r$, the volume $V_r$ of the GFT condensate not only sets the scale of the radius $l_A$ of the refined 3-sphere, but also the scale $\varphi$ of cosmic time.
 In this sense, we can interpret the parameter $\beta$ as the inverse ``temperature" of the  apparent horizon, $\beta(r)\propto l_A$, so that
 \begin{equation}\varphi(\beta,r)\propto\varphi(l_A,r)\end{equation}
 depends monotonically on $V_r$. 
\end{example}
Relating the $\beta$ to the scale or the number of vertices in a GFT condensate has another {\it a posteriori} advantage: a fixed $\beta$ means a fixed finite number of Fock excitations in the GFT condensate, so that the tracial states can be directly defined for this condensate state without worrying the types of von Neumann algebras. A point to note is that for a TFD state $\ket{\Omega_S}$, the number of excitation in the untilded system is $\braket{\Omega_S|\phi^\dag\phi|\Omega_S}=\sinh^2\abs{b(\beta)}\equiv n$, which obviously has a  monotonic dependence on $\beta$. Now  consider the ``off-diagonal" correlator $\braket{\Omega_S|\Psi^\dag(\psi)\Psi(\psi')|\Omega_S}$. By the background-independent  definition of the field operators and \eqref{pprime}, we see that the $\psi'$ will be identified with $\psi$ after the Bogoliubov transformation to the unsmeared group fields. As a consequence, we have, by relating $\beta$ to $n$,
\begin{equation}\label{4.14}
\braket{\Omega_S|\Psi^\dag(\psi)\Psi(\psi')|\Omega_S}\sim n,
\end{equation}
which shows the off-diagonal long-range order in $\ket{\Omega_S}$, an essential characterization of BEC \cite{ODLRO}.

\section{Conclusion and discussion}\label{s5}
In this paper, we have studied the TFD extension of GFT in the operator-algebraic formulation with the second-quantization interpretation. In view of the equilibrium Gibbs states for GFT recently obtained in \cite{KO18}, this TFD extension leads to an equilibrium GFT ``at finite temperature", which in turn reveals the ``thermal" aspects of quantum gravity.

The explicit construction takes analogy from the the TFD states of bosonic oscillators, since the GFT state space in the second quantized formulation has the structure of the bosonic Fock space. We have defined the squeezed states \eqref{38} on the Weyl algebra of group fields and shown that these squeezed states can take the standard form of TFD states \eqref{42} on the doubled Hilbert space. 
We have worked in the operator-algebraic approach to TFD where the ``tilde" algebra of observables is the commutant of the original one. In this  formulation we represent the TFD states for GFT as fat graphs with doubled-line edges.

The constructed TFD states  are condensate GFT states ``at finite temperature" when viewed from the original GFT vacuum. With this understanding, we have discussed the possible appearances of the TFD states  in the GFT condensate description of black hole horizon and quantum cosmology.

The algebraic KMS Gibbs state is of course not the only equilibrium state in GFT.  The maximal entropy principle  gives the generalized Gibbs states ($\propto e^{\sum_l\beta_lO_l}$) of GFT \cite{CKO19,Kot19}, which allows a formulation of statistical field theory of quantum polyhedra. The  thermal field theories in this approach are recently investigated in \cite{AK20}. In that approach, the TFD states are directly defined by the thermal Bogoliubov transformations on the doubled Fock states.
 Here we have taken a detour through the von Neumann algebras and return to the local forms the product of two Hilbert spaces. Despite the similar final forms of the TFD states, there is in fact a major difference: In their approach the  Gibbs factors in the TFD states are given by the generalized Gibbs states with an infinite number of ``temperatures'' $\beta_l$s corresponding to the modes of quantum fields; in our approach the symmetry is algebraic so that all field modes have the same CCR algebra and the same global $\beta$-parameter. When localized to the product form, we can still obtain the finite-dimensional cases with different local $\beta$-parameters. In this sense, the difference lies in the global equilibrium configurations: In their approach one can has an
extensive number of conserved charges and the global thermal equilibrium state is the generalized Gibbs state, whereas in our approach we can only have the global thermal equilibrium states as the canonical Gibbs states. In quantum many-body physics, the generalized Gibbs states usually appear as the equilibrium states of the  integrable  systems with an
extensive number of conserved charges, which is, however, not the canonical sense of thermalization. 
In our approach, the canonical Gibbs state is the result of the canonical thermalization. In view towards the semiclassical gravitational thermodynamics in the canonical statistical-mechanical formulation, we think our approach remains physically useful.

\section*{Acknowledgements} 
The author thanks Mehdi Assanioussi, Isha Kotecha and the anonymous referee for various insightful comments. 
This work is partially supported by the National Natural Science Foundation of China through the Grant Nos. 11875006 and
11961131013.
\appendix
\section{Thermo field dynamics extension via Hopf algebra}\label{appA}
In this appendix,  we present another algebraic approach to the TFD extension of GFT via Hopf algebra. The starting point is the Hopf algebra of group fields or of $\mathbb{C}$-functions on group $G$, with which the  GFT on the Drinfeld double $D(G)$ of $G$ has been studied in \cite{Kra07} (see also \cite{BGO11}). Our strategy here is,  instead of considering the Drinfeld double, to apply the quantum-algebra interpretation of TFD doubling \cite{CDDIRV98} to the Hopf algebra of group fields.

\subsection{Hopf algebra of group fields}
The group fields, as functions on groups, can have the structure of a Hopf algebra. To see this, let us first recall that, in the noncommutative metric representation of GFT \cite{BO10}, the convolution of group fields is transformed into the noncommutative $\star$-product of plane wavefunctions $e_g(x)$, e.g.
\begin{align}
\int_\mathbb{R} dx \hat{\phi}_1\star\hat{\bar{\phi}}_2(x)=&\int_\mathbb{R} dx \int_G d\mu(g)\phi_1(g)\phi_2(g^{-1})e_g\star e_{g^{-1}}(x)=\int_\mathbb{R} dx \int_G d\mu(g)\phi_1(g)\phi_2(g^{-1}) e_{gg^{-1}}(x)\nonumber\\
=&\int_G d\mu(g)\phi_1(g)\phi_2(g^{-1})\delta(gg^{-1})=\int_G d\mu(g)\phi_1(g)\phi_2(g)\label{54}
\end{align}
where $\hat{\bar{\phi}}(x)=\hat{{\phi}}(-x)$ and $e_g\star e_{g'}=e_{gg'}$.

 From \eqref{54} we can almost see the structure of a Hopf algebra. Indeed, we have exactly a Hopf algebra $\mathsf{A}^*$ of group fields: The product or multiplication  in $\mathsf{A}^*$ is simply
\begin{equation}
\phi_1\star\phi_2(g)=\phi_1(g)\phi_2(g),\quad g\in G,
\end{equation}
and obviously the unit  is the constant group field $\iota(g)=\phi_I(g)\equiv1$ which is the  identity element of $\mathsf{A}^*$ . The coproduct in $\mathsf{A}^*$ is
\begin{equation}
\Delta^*\phi(g,g')=\phi(gg'),\quad g,g'\in G.
\end{equation}
This is because the function algebra $\mathcal{F}(G\times G)\cong\mathcal{F}(G)\otimes\mathcal{F}(G)$, so that the tensor product in a Hopf algebra is transfered to a two-fold group field. The co-unit in $\mathsf{A}^*$ is then  $\varepsilon(\phi)=\phi(g_0)$ with $g_0$ being the identity element of $G$ such that $\iota\circ\varepsilon(\phi)=\phi_I(g_0)=1$.  The antipode is, as in \eqref{54}, $\mathsf{s}(\phi(g))=\phi(g^{-1})$; the involution is simply the complex conjugation. It is easy to see that the defining relation for a Hopf algebra, after restoring the tensor product notation,
\begin{equation}
\star\circ(1\otimes\mathsf{s})\circ\Delta^*=\star\circ(\mathsf{s}\otimes1)\circ\Delta^*=\iota\circ\varepsilon
\end{equation}
 holds here.

The convolution of group fields \eqref{54} defines an inner product and hence a pairing of group fields, 
\begin{equation}
\braket{\phi_1,\phi_2}=\int_G d\mu(g)\phi_1(g)\phi_2(g).
\end{equation}
For reasons explained in \cite{Kra07}, it is also useful to work with the dual algebra $\mathsf{A}$ of $\mathsf{A}^*$. $\mathsf{A}$ is also a Hopf algebra with the product $\bullet$ given by the duality $\braket{\phi_1\bullet\phi_2,\phi}=\braket{\phi_1\otimes\phi_2,\Delta^*\phi}$, that is,
\begin{equation}
\phi_1\bullet \phi_2(g)=\int_G d\mu(h)\phi_1(h)\phi_2(h^{-1}g),\quad g,h\in G.
\end{equation}
The identity of this $\bullet$-product is then the $\delta$-function $I(g)=\delta_{g_0}(g)$ such that $I\bullet\phi(g)=\phi(g)$. The coproduct and the co-unit in $\mathsf{A}$ are respectively
 \begin{equation}
\Delta\phi(g,g')=\phi(g)\delta_g(g'),\quad \varepsilon(\phi)=\int_G d\mu(g)\phi(g).
\end{equation}
The antipode $\mathsf{s}$ is defined as in $\mathsf{A}^*$, but the involution $*$ is now $\phi^*(g)=\overline{\phi(g^{-1})}$. In this case of $\mathsf{A}$, we still have the relation $\bullet\circ(1\otimes\mathsf{s})\circ\Delta=\bullet\circ(\mathsf{s}\otimes1)\circ\Delta=I\circ\varepsilon$, if we simply put $\Delta\phi=\phi(g)\otimes\phi(g)$ in the tensor product notation.

The $\bullet$-product induces a presentation of the gauge invariance of group fields through the projector $\vartheta=\vartheta\bullet\vartheta$ on $\mathcal{A}^{\otimes n}$ where
\begin{equation}\label{62}
\vartheta(\{g_i\})=\int_{G}d\mu(h)\prod_{i=1}^n\delta_{g_0}(g_i^{-1}h).
\end{equation}
It is easy to see that $\vartheta\bullet\phi(\{hg_i\})=\vartheta\bullet\phi(\{g_i\})$ for $h\in G$. The $\vartheta$ can be considered as a differential, so that the $\delta$-functions in \eqref{62}  are naturally the propagators of a single-fold group field. According to the quantum geometric interpretation of GFT, a gauge-invariant group field generates a quantum polyhedron. Then the Hopf algebra $\mathsf{A}$ dictates the way in which these quantum polyhedra are related. By recalling how the simplical spin foams are generated by the perturbative Feynman diagrams of GFT, the $\mathsf{A}$ is furthermore related to the Hopf algebraic approach to the coarse-graining/renormalization in spin foam models \cite{Mar03}. For example, the $\bullet$-product convolutes two group fields and hence glues two polyhedra; the coproduct $\Delta$ unfolds a spin foam into two subfoams and trivilizes or coarse grains one of them.\footnote{Here the block transforms are encoded in the coproduct, so the antipode in $\mathsf{A}$ is much simpler.}

\subsection{Thermo field dynamics doubling}
Now we show that the TFD can be obtained by the $q$-deformation of the Hopf algebra $\mathsf{A}$. To this end, let us rewrite the coproduct in $\mathsf{A}$ as
\begin{equation}
\Delta:\mathsf{A}\rightarrow\mathsf{A}\otimes\mathsf{A};\quad    \Delta\phi(g)=\phi(g)\delta_g(g')=\frac{1}{2}\bigl(\phi(g)\delta_g(g')\otimes1+1\otimes\phi(g)\delta_{g}(g')\bigr).
\end{equation}
Adopting the second-quantization interpretation of GFT, we see that $\mathsf{A}$ can be made into a bosonic Heisenberg-Weyl algebra by adding a central operator $h=\delta(\{g_i\},\{g_j\})/2$ and the number operator $N=\phi^\dag\phi$. For the purpose of describing TFD doubling, we can focus only on the $q$-deformation of the coproducts of $\phi$:
\begin{equation}
\Delta\phi_q(g)=\frac{1}{2}\bigl(\phi(g)\delta_g(g')\otimes q^{1/2}+q^{-1/2}\otimes\phi(g)\delta_{g}(g')\bigr),
\end{equation}
where the $q$ is required to satisfy $\abs{q}=1$.

The coproduct $\Delta$ naturally doubles the degrees of freedom in the Fock space generated by the group fields $(\phi,\phi^\dagger)$. The following result shows that the  $q$-deformed coproduct can be related to the TFD doubling,
 \begin{proposition}
Let $q=e^{2\theta}$ be the parameter of the $q$-deformation. If the $\theta=\theta(\beta)$ is a squeezing parameter, then the TFD squeezed states in GFT are obtained by the $q$-deformed coproduct of group fields.
\end{proposition}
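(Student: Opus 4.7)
The plan is to adapt the Celeghini--De~Martino--De~Siena--Iorio--Rasetti--Vitiello program \cite{CDDIRV98} for deriving the TFD Bogoliubov transformation from a $q$-deformed coproduct, and transplant it onto the Hopf algebra $\mathsf{A}$ of group fields just introduced. First I would substitute $q=e^{2\theta}$ into the $q$-deformed coproduct to get
\begin{equation*}
\Delta\phi_q(g)=\tfrac{1}{2}\bigl(\phi(g)\delta_g(g')\otimes e^{\theta}+e^{-\theta}\otimes\phi(g)\delta_g(g')\bigr),
\end{equation*}
and the analogous expression for $\phi^\dag$. Identifying the second tensor slot with the ``tilde'' copy, i.e. letting $1\otimes\phi(g')$ play the role of $\tilde{\phi}(g')$ on the doubled Fock vacuum $\ket{\Omega}\otimes\ket{\Omega}$, this coproduct redistributes a single group-field excitation between $\mathsf{A}$ and $\tilde{\mathsf{A}}$ with $\theta$-dependent weights $e^{\pm\theta}$, exactly the structure required for a TFD doubling.

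Next I would promote $\theta$ to the parameter of a one-parameter family of coproducts and differentiate at $\theta=0$ to read off the infinitesimal generator. Using the bosonic Heisenberg--Weyl structure identified before the proposition (central element $h$ and number operator $N=\phi^\dag\phi$), the generator takes the form $\mathfrak{G}=\phi^\dag\tilde\phi^\dag-\phi\tilde\phi$, which is the squeezing generator already used in \eqref{40} and \eqref{pprime}. Exponentiating this flow and acting on $\ket{\Omega}\otimes\ket{\Omega}$ yields
\begin{equation*}
\ket{\Omega_S}=\exp\bigl[\theta(\phi^\dag\tilde\phi^\dag-\phi\tilde\phi)\bigr]\ket{\Omega}\otimes\ket{\Omega},
\end{equation*}
which is precisely the squeezed vacuum of Section~\ref{s3.1}. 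Identifying $\theta(\beta)$ with the Bogoliubov angle of \eqref{bogo} (with the eigenvalue $u_*$ of $U_*$ from \eqref{u} playing the role of the mode energy) then reproduces the standard TFD form \eqref{42} with the Gibbs weights $e^{-in\beta u_*/2}$.

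The principal obstacle I expect is reconciling the constraint $|q|=1$, which naively forces $\theta\in i\mathbb{R}$, with the requirement that $\theta(\beta)$ serve as a real hyperbolic Bogoliubov angle. The resolution must specify the analytic continuation that converts the unitary $q$-weights $e^{\pm i\theta}$ into the hyperbolic weights $\cosh\theta,\sinh\theta$ in the GNS representation---the same analytic continuation underlying the imaginary-time $\beta$ of the KMS condition. One has to do this bookkeeping carefully so that $\ket{\Omega_S}$ is a genuinely normalizable state in the doubled Fock space rather than a formal power series. A secondary technical check is that the deformed coproduct remains compatible with the gauge-invariance projector $\vartheta$ of \eqref{62}, so that the TFD state built from $\Delta\phi_q$ lives in the physical, polyhedral sector of $\mathsf{A}$ rather than leaking into the non-gauge-invariant part.
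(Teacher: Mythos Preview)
Your overall strategy---adapting the Celeghini \emph{et al.}\ construction to the Hopf algebra of group fields---matches the paper's, but the central mechanism you propose does not work as stated. Differentiating the coproduct $\Delta\phi_q$ at $\theta=0$ yields $\tfrac{1}{2}(\phi-\tilde\phi)$, a \emph{linear} combination of annihilation operators; it does not produce the bilinear squeezing generator $\mathfrak{G}=\phi^\dag\tilde\phi^\dag-\phi\tilde\phi$. More fundamentally, the $q$-deformed coproduct weights mix $\phi$ with $\tilde\phi$, whereas the TFD Bogoliubov transformation \eqref{pprime} must mix $\phi$ with $\tilde\phi^\dag$. The passage from one to the other is precisely the nontrivial content of the argument in \cite{CDDIRV98}, and your ``differentiate and exponentiate'' shortcut skips it: the automorphism generated by the $\theta$-flow of the coproduct alone is a rescaling (generator $N-\tilde N$), not a squeeze.

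The paper supplies the missing step algebraically. It introduces both the normalized coproduct $A_q=\Delta\phi_q/\sqrt{[2]_q}$ and its $\theta$-derivative $B_q$, then forms the specific combinations
\[
A(\theta)=\tfrac{\sqrt{[2]_q}}{2\sqrt{2}}\bigl(A_{q(\theta)}+A_{q(-\theta)}-B^\dag_{q(\theta)}+B^\dag_{q(-\theta)}\bigr),\quad
B(\theta)=\tfrac{\sqrt{[2]_q}}{2\sqrt{2}}\bigl(B_{q(\theta)}+B_{q(-\theta)}-A^\dag_{q(\theta)}+A^\dag_{q(-\theta)}\bigr),
\]
which explicitly bring in the Hermitian conjugates and the $\theta\!\to\!-\theta$ copies. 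Only through these combinations does one obtain $\phi(\theta)=\tfrac{1}{\sqrt{2}}(A(\theta)+B(\theta))=\phi\cosh\theta-\tilde\phi^\dag\sinh\theta$, the Bogoliubov-rotated annihilator of $\ket{\Omega_S}$. The creation operators $\tilde\phi^\dag$ enter only via the explicit inclusion of $A_q^\dag,B_q^\dag$, which your proposal never invokes. Your remarks on the $|q|=1$ constraint and on compatibility with the projector $\vartheta$ are legitimate side issues, but they are downstream of this core algebraic step, which needs to be repaired first.
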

\begin{proof}
The following proof is an adaptation of \cite{CDDIRV98} to GFT. Consider the ``normalized" $q$-deformed coproduct of group fields and their derivatives,
\begin{align*}
A_q\equiv&\frac{\Delta\phi_q(g)}{\sqrt{[2]_q}}=\frac{1}{2\sqrt{[2]_q}}\bigl(e^{\theta}\phi(g)\delta_g(g')+e^{-\theta}\tilde{\phi}(g)\delta_{g}(g')\bigr),\\
B_q\equiv&\frac{1}{\sqrt{[2]_q}}\frac{\delta(\Delta\phi_q(g))}{\delta\theta}=\frac{1}{2\sqrt{[2]_q}}\bigl(e^{\theta}\phi(g)\delta_g(g')-e^{-\theta}\tilde{\phi}(g)\delta_{g}(g')\bigr).
\end{align*}
Then $A_q+B_q=2e^{\theta}\phi(g')/\sqrt{[2]_q}$ and $A_q-B_q=2e^{-\theta}\tilde{\phi}(g')/\sqrt{[2]_q}$. By introducing 
\begin{align*}
A(\theta)=&\frac{\sqrt{[2]_q}}{2\sqrt{2}}(A_{q(\theta)}+A_{q(-\theta)}-B^\dag_{q(\theta)}+B^\dag_{q(-\theta)}),\\
B(\theta)=&\frac{\sqrt{[2]_q}}{2\sqrt{2}}(B_{q(\theta)}+B_{q(-\theta)}-A^\dag_{q(\theta)}+A^\dag_{q(-\theta)}),
\end{align*}
we have for any $g'\in G$,
\begin{equation}\label{65}
\phi(\theta)=\frac{1}{\sqrt{2}}(A(\theta)+B(\theta))=\phi\cosh\theta-\tilde{\phi}^\dag\sinh\theta
\end{equation}
and likewise $\tilde{\phi}(\theta)=(A(\theta)-B(\theta))/\sqrt{2}$. Thus, \eqref{65} matches the Bogoliubov transformation for TFD squeezed vacuum $\ket{\Omega_S}$ such that $\phi(\theta)\ket{\Omega_S}=0$.
\end{proof}

Now that the TFD states can be obtained from the $q$-deformed Hopf algebra of group fields, it is tempting to relate the TFD states to the hyperbolic quantum geometry of $q$-deformed LQG \cite{DG13}. However, in $q$-deformed LQG  the gauge group is deformed into a quantum group, for example, $U_q(\mathfrak{su}(2))$, whereas in the TFD extension of GFT, what is deformed is the Hopf algebra of group fields which are still defined on the original gauge group. In other words, the $q$-deformed gauge group changes the algebraic data on, or the coloring of, the  spin network states of the undeformed LQG, while in the TFD case the geometric coloring is preserved but the spin network states are squeezed into a particular condensate state. Physically, the $q$-deformation of the gauge group puts a ``cutoff", related to the cosmological constant, on the LQG vertex amplitudes, but the TFD vacuum state mixes the spin network states of LQG. In this sense, there does not seem to be a direct relation between these two quantum deformations. 

From the point of view of coarse-graining  spin foams, the $q$-deformed coproduct does not completely coarse grain the spin foams, but retains the mixing of all the subfoams.

\subsection{A speculative quantum-geometric  interpretation}
In the algebraic formulation of TFD in the main text,  $\tilde{\mathcal{M}}=\mathcal{M}'\neq\mathcal{M}$ in general. In other words, the ``tilde" system has an observable algebra different from that of the original GFT system (of  LQG), and they should represent different but correlated physical systems.

We  observe the following: (i) The $\mathcal{M}'$ is {\it emergent} in the sense that it is defined according to the the algebraic TFD rules and the equilibrium GFT Gibbs states. This is similar to the inclusion of matter into quantum gravity by generalizing GFTs to those defined on the Drinfeld double $D(G)$ of the gauge group $G$ \cite{Kra07}, where the dual group of $G$ encodes the matter content. (ii) To each GFT Fock state $\ket{\{g_i\},n}$ there corresponds a ``tilde" state $\ket{\{g'_i\},\tilde{n}}$. In a sense, this means that every piece of quantum geometric data is coupled to some other state, as is proposed for gravity coupled with scalar matter in \cite{XM09}. (iii) The  ``tilde" states are nontrivial only when there are interactions, because for the free Fock states $\mathcal{M}'_F=\eta{\bf1}$ is trivial.
 Since these ``tilde" states  are in the commutant  $\mathcal{M}'$ they do not affect the dynamics of $\mathcal{M}$.
  Summarizing the above observations, we speculate that  the TFD states in GFT are {\it quantum polyhedra dressed with the emergent ``matter" states}.
 
 Suppose $\mathcal{M}=\pi_\omega^{\prime\prime}(\mathcal{W})=\pi'(K)$ where $\pi(K)$ is  a representation of a group $K$, then $\mathcal{M}'=\pi^{\prime\prime}(K)$ is also a von Neumann algebra, so that $\pi^{\prime\prime}(K)=\pi(K)$. In other words, $\mathcal{M}'$ is determined by the (irreducible) representations of the group $K$, which is similar to how matter quantum fields are classified by the irreducible representation of the Lorentz group. However, we know nothing about the group $K$  at present.

\bibliographystyle{amsalpha}

\end{document}